\newcommand{\cH}{\mathcal{H}}
\newcommand{\cI}{\mathcal{I}}
\newcommand{\cJ}{\mathcal{J}}
\newcommand{\cK}{\mathcal{K}}
\newcommand{\cL}{\mathcal{L}}
\newcommand{\cQ}{\mathcal{Q}}
\newcommand{\cR}{\mathcal{R}}
\newcommand{\cS}{\mathcal{S}}
\newcommand{\cT}{\mathcal{T}}
\newcommand{\tA}{\tilde{A}}
\newcommand{\tPhi}{\tilde{\Phi}}
\newcommand{\tI}{\tilde{\mathcal{I}}}
\newcommand{\tK}{\tilde{\mathcal{K}}}
\newcommand{\tH}{\tilde{\mathcal{H}}}
\newcommand{\tx}{\tilde{x}}
\newcommand{\ty}{\tilde{y}}
\def \diracspacing {0.7pt}
\newcommand{\ketbra}[2]{| \hspace{\diracspacing} #1 \rangle \langle #2 \hspace{\diracspacing} |} 
\newcommand{\ketbraq}[1]{\ketbra{#1}{#1}} 
\newcommand{\Id}{\mathbb{I}}
\newcommand{\IdC}{\mathrm{id}}
\newcommand{\Ins}{\mathbbm{I\hspace{-0.1em}n}}
\newcommand{\M}{\mathbbm{M}}
\newcommand{\Ch}{\mathbbm{C\hspace{-0.1em}h}}
\DeclareMathOperator{\tr}{tr}
\newtheorem{theorem}{Theorem}
\newtheorem{proposition}[theorem]{Proposition}
\newtheorem{lemma}[theorem]{Lemma}
\newtheorem{corollary}[theorem]{Corollary}
\newtheorem{definition}[theorem]{Definition}
\newtheorem{remark}[theorem]{Remark}
\begin{document}

\title{Characterizing and quantifying the incompatibility of quantum instruments}
\author{Arindam Mitra}
\email{amitra@imsc.res.in}
\affiliation{Optics and Quantum Information Group, The Institute of Mathematical Sciences,
C. I. T. Campus, Taramani, Chennai 600113, India}
\affiliation{Homi Bhabha National Institute, Training School Complex, Anushaktinagar, Mumbai 400094, India}

\author{M{\'a}t{\'e} Farkas}
\email{mate.farkas@icfo.eu}
\affiliation{ICFO-Institut de Ciencies Fotoniques, The Barcelona Institute of Science and Technology, 08860 Castelldefels, Spain}

\date{\today}

\begin{abstract}
Incompatibility of quantum devices is one of the cornerstones of quantum theory, and the incompatibility of quantum measurements and channels has been linked to quantum advantage in certain information theoretic tasks. In this work, we focus on the less well-explored question of the incompatibility of quantum instruments, that is, devices that describe the measurement process in its entirety, accounting for both the classical measurement outcome and the quantum post-measurement state. In particular, we focus on the recently introduced notion of parallel compatibility of instruments, which has been argued to be a natural notion of instrument compatibility. We introduce---similarly to the case of measurements and channels---the incompatibility robustness of quantum instruments and derive universal bounds on it. We then prove that post-processing of quantum instruments is a free operation for parallel compatibility. Last, we provide families of instruments for which our bounds are tight, and families of compatible indecomposable instruments.
\end{abstract}

\maketitle

\section{Introduction}

The incompatibility of devices is a key feature of quantum theory that sets it apart from classical physics \cite{HMZ16}. The notion of incompatibility captures the fact that some devices (such as measurements or state transformations) cannot be performed simultaneously. While at first incompatibility may sound like a drawback, it turns out that incompatibility is an essential resource in various information processing tasks with quantum advantage \cite{Kim-Bell-incomp,Vertesi-steer-incomp,Skrzypczyk-state-dis-incomp,Heino-QRAC}. Furthermore, incompatibility is linked to fundamental concepts such as Bell nonlocality \cite{WPF09}, Einstein--Podolsky--Rosen steering \cite{Vertesi-steer-incomp,UBGP15} or contextuality \cite{TU20}. For this reason, characterizing incompatible devices and quantifying incompatibility is a fundamental research program relevant to quantum information processing.

Arguably, the type of quantum devices whose incompatibility is the most studied is that of quantum measurements, while there also exist various results regarding the incompatibility of quantum channels (i.e., state transformations). The third commonly studied family of quantum devices generalizes measurements and channels and is called quantum instruments. Instruments take a quantum state as an input, and produce a classical and a quantum output. That is, quantum instruments describe the full measurement process including the measurement outcome (the classical output) and the post-measurement state (the quantum output).

While the compatibility of quantum instruments has been studied in the literature, it has recently been pointed out that the ``traditional'' notion of instrument compatibility might not be satisfactory in some cases \cite{MF22}. The authors of Ref.~\cite{MF22} introduce the notion of \textit{parallel} compatibility and argue that it is a more natural notion of instrument compatibility than traditional compatibility. In this work we initiate the characterization of parallel compatible instruments and the quantification of parallel incompatibility.

We start with a few technical preliminaries in Section~\ref{Sec:preli} on quantum devices, their compatibility, and quantifying incompatibility. Then in Section~\ref{sec:main-charac-comp-ins} we introduce the incompatibility robustness of quantum instruments, prove bounds on this quantity, prove that post-processing of instruments is a free operation for parallel compatibility, and provide families of instruments for which our bounds are tight, as well as families of indecomposable compatible instruments. Finally, in Sec. \ref{Sec:conc}, we summarize our results and discuss future directions.

\section{Preliminaries}\label{Sec:preli}
Every quantum system has an associated Hilbert space $\cH$ that (while most definitions and statements in this paper generalize to the infinite dimensional case) we will assume to be finite dimensional. We denote the set of linear operators on the Hilbert space $\cH$ by $\cL(\cH)$ and the set of positive semidefinite linear operators on the Hilbert space $\cH$ by $\cL^+(\cH)$. We denote the set of quantum states (i.e., density matrices, or positive semidefinite operators with trace one) on $\cH$ by $\cS(\cH)$.

\subsection{Quantum measurements}
A quantum measurement $A$ with a finite outcome set $\Omega_A$ is described by a set of $| \Omega_A |$ positive semidefinite operators on $\cH$, i.e., $A=\{A(x)\}_{x \in \Omega_A}$ such that $\sum_{x \in \Omega_A} A(x) =\Id_{\cH}$ where $\Id_{\cH}$ is the identity operator on~$\cH$ and $| \Omega_A |$ is the cardinality of $\Omega_A$ \cite{Heino-book}. If the measurement $A$ is performed on a quantum state $\rho\in\cS(\cH)$, the probability to obtain the outcome $x$ is given by $\tr[\rho A(x)]$. We denote the set of measurements with outcome set $\Omega$ acting on the Hilbert space $\cH$ by $\M(\Omega,\cH)$. A measurement $A$ is said to be a trivial measurement if $A(x)=p_x\Id$ for all $x\in\Omega_A$ for some probability distribution $p_x$. In general, we will use the term positive operator-valued measure (POVM) interchangeably with the term quantum measurement. Furthermore, if $A^2(x)=A(x)$ for all $x\in\Omega_A$, then the measurement is called \textit{projective}, or \textit{projection-valued-measure} (PVM).

A pair of measurements $(A,B)$ on $\cH$ is said to be \textit{compatible} \cite{HMZ16} if there exists a \textit{joint measurement} $G$ on $\cH$ with outcome set $\Omega_A\times\Omega_B$ such that
\begin{align}
A(x)=\sum_{y\in\Omega_B}G(x,y) \quad \forall x \in \Omega_A\\
B(y)=\sum_{x\in\Omega_A}G(x,y) \quad \forall y \in \Omega_B
\end{align}
The outcome $(x,y)$ of the measurement $G$ on a given quantum state $\rho$ is distributed according to the joint distribution of the outcome of $A$ and $B$ on the same quantum state. Therefore, implementing $G$ amounts to simultaneously implementing $A$ and $B$.
Measurement pairs that are not compatible are called \textit{incompatible}, and the definition of (in)compatibility naturally generalizes to sets of more than two measurements. Measurement incompatibility has been linked to quantum advantage in various information processing tasks \cite{Kim-Bell-incomp,Vertesi-steer-incomp,Skrzypczyk-state-dis-incomp,Heino-QRAC}. Therefore, measurement incompatibility can be thought of as a resource in these tasks \cite{Buscemi-res-incomp}.

\subsection{Quantum operations and quantum channels}

A quantum operation $\Phi:\cS(\cH)\rightarrow\cL^+(\cK)$ is a completely positive (CP) trace non-icreasing map such that for any set of states $\{\rho_i\in\cS(\cH)\}$ and any probability distribution $\{p_i\}$, we have that $\Phi(\sum_ip_i\rho_i)=\sum_ip_i\Phi_i(\rho_i)$. It is known that every quantum operation has a unique linear extension $\tilde{\Phi}:\cL(\cH)\rightarrow\cL(\cK)$,
 i.e., a unique linear map $\tilde{\Phi}$ such that $\Phi=\tilde{\Phi}|_{\cS(\cH)}$ where $\tilde{\Phi}|_{\cS(\cH)}$ is the restriction of $\tilde{\Phi}$ to $\cS(\cH)$ \cite{Heino-book}.
 
In words, a quantum operation maps every quantum state to a sub-normalized quantum state. Quantum operations that map quantum states to normalized quantum states (and hence can be thought of as state transformations) are called \textit{quantum channels}. Formally, a quantum channel $\Lambda:\cS(\cH)\rightarrow\cS(\cK)$ is a trace preserving (TP) quantum operation, also called a CPTP map. We denote the set of channels with input space $\cH$ and output space $\cK$ by $\Ch(\cH, \cK)$.
 
Two quantum channels $\Lambda_1:\cS(\cH)\rightarrow\cS(\cK_1)$ and $\Lambda_2:\cS(\cH)\rightarrow\cS(\cK_2)$ are said to be \textit{compatible}  \cite{HMZ16} if there exists a \textit{joint channel} $\Lambda:\cS(\cH)\rightarrow\cS(\cK_1\otimes\cK_2)$ such that for all $\rho\in\cS(\cH)$ we have that
 \begin{align}
 \Lambda_1(\rho)=\tr_{\cK_2}\Lambda(\rho)\\
 \Lambda_2(\rho)=\tr_{\cK_1}\Lambda(\rho).
 \end{align}
The output of the joint channel $\Lambda$ is a joint (composite) state of the outputs of $\Lambda_1$ and $\Lambda_2$ on a tensor product Hilbert space. Channels that are not compatible are called  \textit{incompatible}, and the definition of (in)compatibility naturally generalizes to sets of more than two channels.
 The incompatibility of quantum channels has been studied in various works \cite{Heino-incom-chan,Plavala-incomp-chan,Plavala-incomp-chan-gen-prob} and it has been shown to provide advantage in quantum state discrimination \cite{Mori-incomp-chan}.

\subsection{Quantum Instruments}
A quantum instrument $\cI$ is described by a set of quantum operations, in particular, $\cI=\{\Phi_x:\cS(\cH)\rightarrow\cL^+(\cK)\}_{x\in\Omega_{\cI}}$ such that $\Phi=\sum_{x \in \Omega_\cI}\Phi_x$ is a quantum channel \cite{Davies-in,Heino-book,Pellonpaa-1,Pellonpaa-2}.
We denote the set of quantum instruments with outcome set $\Omega_{\cI}$, input space $\cH$ and output space $\cK$ by $\Ins(\Omega_{\cI},\cH,\cK)$.
Every quantum instrument induces a unique quantum measurement (a POVM)~$A$. That is, a unique set of positive semidefinite operators $A(x)$ on $\cH$ such that $\tr[ \rho A(x) ] = \tr[ \Phi_x(\rho) ]$ for all $\rho \in \cS(\cH)$ and all $x \in \Omega_\cI$. It is in fact straightforward to see that these operators are defined by $A(x) = \Phi^\ast_x(\Id)$, where $\Phi_x^\ast: \cL(\cK) \to \cL(\cH)$ is the dual of $\Phi_x$ defined via $\tr[ \Phi_x(\rho) M ] = \tr[ \rho \Phi^\ast_x(M) ]$ for all $\rho\in\cS(\cH)$ and for all $M \in \cL(\cK)$ . The probability of observing outcome $x$ upon measuring $A$ on $\rho$ is given by $\tr[ \Phi_x(\rho) ]$ and the (un-normalized) post-measurement state is given by $\Phi_x(\rho)$. An instrument inducing the POVM $A$ is sometimes called an $A$-compatible instrument. While every instrument induces a unique POVM, for a given POVM there exist many different instruments implementing it. Therefore, one can think of instruments as various ways of implementing a POVM, with various post-measurement states.

Similarly to the case of quantum measurements and quantum channels, one might call two quantum instruments \textit{compatible} if they can be performed simultaneously. The following definition of \textit{parallel compatibility} of quantum instruments is one way to make this intuition rigorous \cite{MF22}.

\begin{definition}[Parallel compatibility]\label{Def.parallel_comp}
A pair of quantum instruments $\cI_1=\{\Phi^{1}_x:\cS(\cH)\rightarrow\cL^+(\cK_1)\}_{x\in\Omega_{\cI_1}}$ and $\cI_2=\{\Phi^2_y:\cS(\cH)\rightarrow\cL^+(\cK_2)\}_{y\in\Omega_{\cI_2}}$ are (parallel) compatible if there exists a joint quantum instrument  $\cI=\{\Phi_{xy}:\cS(\cH)\rightarrow\cL^+(\cK_1\otimes\cK_2)\}_{x\in\Omega_{\cI_1},y\in\Omega_{\cI_2}}$ such that
\begin{align}
\sum_y\tr_{\cK_2}\Phi_{xy}=\Phi^{1}_x \quad \forall x \label{Eq:def-par-comp-1}\\
\sum_x\tr_{\cK_1}\Phi_{xy}=\Phi^{2}_y \quad \forall y \label{Eq:def-par-comp-2}.
\end{align}
\end{definition}

In words, the joint instrument $\cI$ simultaneously reproduces both the classical ($x$ and $y$) and the quantum [$\Phi^1_x(\rho)$ and $\Phi^2_y(\rho)$] outputs of $\cI_1$ and $\cI_2$. It has been argued in Ref.~\cite{MF22} that parallel compatibility is in some cases a favourable definition of instrument compatibility compared to the ``traditional'' definition \cite{HMR14,Lev16,Gud20}, which requires the existence of a joint instrument such that $\sum_y \Phi_{xy} = \Phi^1_x$ and $\sum_x \Phi_{xy} = \Phi^2_y$ for all $x$ and $y$. Therefore, in the following by ``compatible instruments'' we mean parallel compatible instruments, unless otherwise specified. It is also clear that this definition naturally generalizes to more than two instruments.

We now recall the notion of \textit{post-processing} of quantum instruments, which will be useful for characterizing compatible instruments \cite{LS21}.

 \begin{definition}
 Consider the instruments $\cI_1=\{\Phi^1_{x}\}_{x\in\Omega_{\cI_1}}\in \Ins({\Omega}_{\cI_1},\cH,\cK_1)$ and $\cI_2=\{\Phi^2_{y}\}_{y\in\Omega_{\cI_2}}\in \Ins({\Omega}_{\cI_2},\cH,\cK_2)$. Then, $\cI_2$ is said to be the post-processing of $\cI_2$ if there exists a set of instruments $\{\cR^x=\{R^{x}_y\}_{y\in\Omega_{\cI_2}}\in \Ins ({\Omega}_{\cI_2},\cK_1,\cK_2)\}_{x\in\Omega_{\cI_1}}$ such that
\begin{align}\label{eq:def-post-proc-in}
\Phi^2_{y}=\sum_{x\in\Omega_{\cI_1}} R^{x}_y\circ\Phi^1_{x} \quad \forall y \in \Omega_{\cI_2}.
\end{align} 
We denote this relation as $\cI_2 \lesssim \cI_1$, reflecting the fact that this relation induces a preorder on the set of instruments.
 \end{definition}
 
The post-processing preorder induces an equivalence relation: $\cI_1 \sim \cI_2$ if both $\cI_1 \lesssim \cI_2$ and $\cI_2 \lesssim \cI_1$ hold. On the equivalence classes post-processing becomes a partial order \cite{LS21}.

Two simple classes of instruments, defined directly through their induced POVMs, are so-called measure-and-prepare instruments. One can intuitively think of them as performing a measurement and then preparing a state depending on the outcome of the measurement.
 
\begin{definition}\label{Def:spec-meas-prep}
For a measurement $A \in \M( \Omega_A, \cH)$, a special measure-and-prepare instrument is defined as $\mathfrak{J}_{A}=\{J^A_x:\cS(\cH)\rightarrow\cL^+(\cK)\}_{x\in\Omega_A}$ with $\dim(\cK)= |\Omega_A|$, where
\begin{align}
J^A_x(\rho):=\tr[\rho A(x)]\ket{x}\bra{x}
\end{align}
 for all $\rho\in\cS(\cH)$ and $x\in\Omega_A$, and $\{\ket{x}\}_{x\in\Omega_A}$ is an orthonormal basis on $\cK$. The corresponding (special measure-and-prepare) quantum channel $\Gamma_A$ acting on a quantum state $\rho\in\cS(\cH)$ is given by $\Gamma_{A}(\rho)=\sum_xJ^A_x(\rho)=\sum_x\tr[\rho A(x)]\ket{x}\bra{x}$. 
 \end{definition}
 
 \begin{definition}\label{Def:meas-prep}
For a measurement $A\in \M( \Omega_A, \cH)$, a measure-and-prepare instrument is defined as $\mathfrak{J^{\prime}}_{A}=\{J^{\prime A}_x:\cS(\cH)\rightarrow\cL^+(\cK)\}_{x\in\Omega_A}$  such that for all $\rho\in\cS(\cH)$ and $x\in\Omega_A$
\begin{align}
J^{\prime A}_x(\rho):=\tr[\rho A(x)]\rho^{\prime}_x
\end{align}
where $\rho^{\prime}_x\in\cS(\cK)$ are some fixed quantum states. The corresponding (measure-and-prepare) quantum channel $\Gamma^{\prime}_A$ acting on a quantum state $\rho\in\cS(\cH)$ is given by $\Gamma^{\prime}_{A}(\rho)=\sum_xJ^{\prime A}_x(\rho)=\sum_x\tr[\rho A(x)]\rho^{\prime}_x$.
 \end{definition}
 
Last, measure-and-prepare instruments corresponding to trivial measurements are called \textit{trash-and-prepare} instruments, since their outcome does not depend on the input state.
 
  \begin{definition}\label{Def:trash-prep}
A measure-and-prepare instrument $\mathfrak{J^{\prime}}_{T}$ is called as a trash-and-prepare instrument if $T$ is a trivial measurement.
 \end{definition}

\subsection{Incompatibility robustnesses}

A common way of quantifying the incompatibility of a set of quantum devices is by asking how much noise needs to be added to the set in order to make it compatible \cite{HMZ16}. While this quantifier depends on what is considered to be ``noise'' \cite{DFK19}, in what follows we adopt the definition usually referred to as \textit{generalized robustness}.

For a pair of measurements $\{A_1(x)\}_{x \in \Omega_{A_1}}$ and $\{A_2(y)\}_{y \in \Omega_{A_2}}$ on $\cH$, the generalized incompatibility robustness is defined as
\begin{equation}\label{Def:ROI-measurements}
\begin{split}
R_M(A_1,A_2) = \min & \left. r \right. \\
\text{s.t.} ~ & \left. \frac{A_1(x) + r \tilde{A}_1(x)}{1+r} = \sum_{y} G(x,y) \quad \forall x \right. \\
& \left. \frac{A_2(y) + r \tilde{A}_2(y)}{1+r} = \sum_x G(x,y) \quad \forall y \right. \\
& \left. \tilde{A}_1(x), \tilde{A}_2(y) \ge 0 \quad \forall x,y \right. \\
& \left. \sum_x \tilde{A}_1(x) = \sum_y \tilde{A}_2(y) = \Id \right. \\
& \left. G(x,y) \ge 0 \quad \forall x,y. \right.
\end{split}
\end{equation}
In words, the generalized incompatibility robustness quantifies how much noise $(\tilde{A}_1,\tilde{A}_2)$ can be added to the pair $(A_1,A_2)$ before the noisy POVM pair $(A_1+r\tilde{A}_1)/(1+r)$ and $(A_2+r\tilde{A}_2)/(1+r)$ becomes compatible (with some joint measurement $G$). The term ``generalized'' refers to the fact that $\tA_1$ and $\tA_2$ can be chosen to be arbitrary POVMs on $\cH$ with outcome numbers matching with that of $A_1$ and $A_2$, respectively. Note that the incompatibility robustness can be cast as an efficiently computable semidefinite program (SDP) \cite{BV04} and that this definition naturally generalizes to more than two measurements.

The incompatibility robustness of two quantum channels $\Phi_1: \cS(\cH) \to \cL^+(\cK_1)$ and $\Phi_2: \cS(\cH) \to \cL^+(\cK_2)$ can be defined similarly:
\begin{equation}\label{Def:ROI-channels}
\begin{split}
R_{C}(\Phi_1,\Phi_2)=\min~  & \left. r\right. \\
\text{s.t.} ~ & \left. \frac{\Phi_1+r\tPhi_1}{1+r}=\tr_{\cK_2}{\Psi}\right. \\
& \left. \frac{\Phi_2+r\tPhi_2}{1+r}=\tr_{\cK_1}{\Psi}\right. \\
& \left. \Psi \in\Ch(\cH,\cK_1 \otimes \cK_2)\right. \\
& \left. \tPhi_i\in\Ch(\cH,\cK_i) \quad i = 1,2. \right.
\end{split}
\end{equation}
Note that via the Choi representation, the incompatibility robustness of quantum channels can be cast as an SDP as well, and that this definition also naturally generalizes to larger sets of channels.

\section{Charaterizing the compatibility of quantum instruments}\label{sec:main-charac-comp-ins}

In this section, we initiate the quantitative characterization of parallel incompatibility of instruments, along similar lines to the characterization of measurement and channel incompatibility through incompatibility robustness. First, we define the generalized incompatibility robustness of quantum instruments in an analogous way to measurements and channels.
\begin{definition}\label{Def:ROI-instruments}
The generalized incompatibility robustness of two quantum instruments $\cI_1=\{\Phi^1_{x}:\cS(\cH)\rightarrow\cL^+(\cK_1)\}_{x \in \Omega_1}$ and $\cI_2=\{\Phi^2_{y}:\cS(\cH)\rightarrow\cL^+(\cK_2)\}{y \in \Omega_2}$ is given by
\begin{equation}\label{Eq:ROI-instruments}
\begin{split}
R_{I}(\cI_1,\cI_2 )=\min~  & \left.  r\right. \\
\text{s.t.} ~ & \left. \frac{\Phi^1_{x}+r\tPhi^{1}_{x}}{1+r}= \sum_y \tr_{\cK_2} \Psi_{xy} \right. \\
& \left. \frac{\Phi^2_{y}+r\tPhi^{2}_{y}}{1+r}= \sum_x \tr_{\cK_1} \Psi_{xy} \right. \\
& \left. \cI = \{ \Psi_{xy} \} \in\Ins(\Omega_1 \times \Omega_2,\cH,\cK_1 \otimes \cK_2)\right. \\
& \left. \tI_1 = \{ \tPhi^1_x \} \in \Ins( \Omega_1, \cH, \cK_1 ) \right. \\
& \left. \tI_2 = \{ \tPhi^2_y \} \in \Ins( \Omega_2, \cH, \cK_2 ) \right.
\end{split}
\end{equation}
\end{definition}
Similarly to the generalized incompatibility robustness of measurements and channels, the generalized incompatibility robustness of instruments can be cast as an SDP (through the Choi states of the individual CP maps), and the definition naturally generalizes to more than two instruments.

In the remainder of this section, we prove some basic characteristics of the incompatibility robustness of quantum instruments. Note that in the following, by ``incompatibility robustness'' we mean generalized incompatibility robustness, unless otherwise specified.

\subsection{Bounds on the incompatibility robustness}

First, we show that the incompatibility robustness of two quantum instruments is lower bounded by the incompatibility robustness of their induced POVMs and their induced channels.

\begin{theorem}
Consider two instruments $\cI_1=\{\Phi^1_x\}\in\Ins(\Omega_{\cI_1},\cH,\cK_1)$ and $\cI_2=\{\Phi^2_y\}\in\Ins(\Omega_{\cI_2},\cH,\cK_2)$. Let us denote the POVM induced by $\cI_i$ by $A_i$. Furthermore, consider the induced channels, $\Phi^1=\sum_{x\in\Omega_{\cI_1}}\Phi^2_x$, and $\Phi^2=\sum_{y\in\Omega_{\cI_2}}\Phi^2_y$.
 Then we have that $R_I(\cI_1,\cI_2)\geq \max \{R_M(A_1,A_2),R_{C}(\Phi_1,\Phi_2)\}$. \label{Th:ROI-lower-bou}
\end{theorem}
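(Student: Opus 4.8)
The plan is to show that any feasible point of the instrument robustness SDP in~\eqref{Eq:ROI-instruments} projects down to a feasible point of both the measurement robustness SDP in~\eqref{Def:ROI-measurements} and the channel robustness SDP in~\eqref{Def:ROI-channels}, with the same value of $r$. Since the optimal $r$ for the instrument problem is thus feasible (hence not smaller than optimal) for each of the two smaller problems, we get $R_I(\cI_1,\cI_2)\geq R_M(A_1,A_2)$ and $R_I(\cI_1,\cI_2)\geq R_C(\Phi_1,\Phi_2)$ separately, and taking the maximum gives the claim.

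First I would take an optimal feasible solution of the instrument problem: a noise level $r=R_I(\cI_1,\cI_2)$, a joint instrument $\{\Psi_{xy}\}$, and noise instruments $\{\tPhi^1_x\}$, $\{\tPhi^2_y\}$ satisfying~\eqref{Eq:ROI-instruments}. For the channel bound, I would sum the two instrument constraints over the classical outcomes. Summing $(\Phi^1_x+r\tPhi^1_x)/(1+r)=\sum_y\tr_{\cK_2}\Psi_{xy}$ over $x$ and defining $\Psi:=\sum_{xy}\Psi_{xy}$, $\tPhi^1:=\sum_x\tPhi^1_x$, $\tPhi^2:=\sum_y\tPhi^2_y$, I would obtain $(\Phi^1+r\tPhi^1)/(1+r)=\tr_{\cK_2}\Psi$ and analogously $(\Phi^2+r\tPhi^2)/(1+r)=\tr_{\cK_1}\Psi$. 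Here $\Psi$ is a channel (a sum of the CP maps of a joint instrument) into $\cK_1\otimes\cK_2$, and $\tPhi^1,\tPhi^2$ are channels (sums of noise-instrument operations) into $\cK_1,\cK_2$. Thus $(r,\Psi,\tPhi^1,\tPhi^2)$ is feasible for~\eqref{Def:ROI-channels}, giving $R_C(\Phi_1,\Phi_2)\leq r$.

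For the measurement bound I would apply the induced-POVM map, i.e. take the trace of each instrument constraint after feeding an arbitrary input and using $A_i(\cdot)=(\Phi^i_\cdot)^\ast(\Id)$. Concretely, setting $G(x,y):=(\sum$ over outputs of $\tr[\Psi_{xy}(\cdot)]$ realised as the POVM element $\Psi_{xy}^\ast(\Id_{\cK_1\otimes\cK_2})$) and defining $\tilde A_i$ as the POVMs induced by the noise instruments $\tI_i$, I would take $\Phi_x^\ast(\Id)$ on both sides of~\eqref{Eq:ROI-instruments}. Using that the dual is linear and that the partial trace commutes with the induced-POVM construction (so that the POVM induced by $\sum_y\tr_{\cK_2}\Psi_{xy}$ equals $\sum_y G(x,y)$), the two instrument constraints become exactly the two POVM constraints $(A_1(x)+r\tilde A_1(x))/(1+r)=\sum_y G(x,y)$ and $(A_2(y)+r\tilde A_2(y))/(1+r)=\sum_x G(x,y)$, while positivity and normalization of $G$, $\tilde A_1$, $\tilde A_2$ follow from the corresponding instrument/channel properties. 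Hence $(r,G,\tilde A_1,\tilde A_2)$ is feasible for~\eqref{Def:ROI-measurements}, giving $R_M(A_1,A_2)\leq r$.

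The step I expect to require the most care is verifying that these projected objects genuinely satisfy all normalization and positivity constraints of the target SDPs---in particular that the induced noise POVMs $\tilde A_i$ are normalized (which uses that $\tI_i$ is an instrument, so its induced channel is trace-preserving, equivalently $\sum_x\tilde A_i(x)=\Id$) and that $\Psi$ from the channel projection is trace-preserving into the tensor-product space. These are exactly the conditions guaranteeing that $r$ remains feasible, not merely that the linear marginal equations hold; checking them is routine but is where the argument could silently break if, say, the noise terms failed to be valid devices. Everything else is a direct consequence of the linearity of partial trace and of the dual map.
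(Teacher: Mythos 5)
Your proposal is correct and follows essentially the same route as the paper's proof: summing the instrument constraints over the classical outcomes to obtain a feasible point of the channel robustness problem, and applying the dual maps to the identity (equivalently, the induced-POVM construction, using $(\tr_{\cK_2}\Psi_{xy})^\ast(X)=\Psi^\ast_{xy}(X\otimes\Id_{\cK_2})$) to obtain a feasible point of the measurement robustness problem. Your extra care in checking normalization of the projected noise devices is a point the paper's proof leaves implicit, but it is the same argument.
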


\begin{proof}
Suppose that $R_I(\cI_1,\cI_2)=r$. Then, following from the definition of incompatibility robustness of instruments in Definition~\ref{Def:ROI-instruments}, there exist quantum instruments $\cI = \{ \Psi_{xy} \} \in \Ins( \Omega_{\cI_1} \times \Omega_{\cI_2}, \cH, \cK_1 \otimes \cK_2)$, $\tI_1 = \{ \tPhi^1_x \} \in \Ins( \Omega_{\cI_1}, \cH, \cK_1)$ and $\tI_2 = \{ \tPhi^2_y \} \in \Ins(\Omega_{\cI_2}, \cH, \cK_2)$ such that
\begin{align}\label{eq:ROI_inst_r1}
\frac{ \Phi^1_x + r \tPhi^1_x }{1+r} = \sum_y \tr_{\cK_2} \Psi_{xy} \quad \forall x \\ \label{eq:ROI_inst_r2}
\frac{ \Phi^2_y + r \tPhi^2_y }{1+r} = \sum_x \tr_{\cK_1} \Psi_{xy} \quad \forall y.
\end{align}
Summing up the first equation over $x$ and the second one over $y$, we see that there exist quantum channels $\tPhi^{1}$ and $\tPhi^{2}$ such that $(\Phi^1+r\tPhi^{1})/(1+r)$ and $(\Phi^2+t\tPhi^{2})/(1+r)$ are compatible. Therefore, by the definition of incompatibility robustness of channels in Eq.~\eqref{Def:ROI-channels}, we have that $r \geq R_{C}(\Phi_1,\Phi_2)$.

In a similar way, one can show that $r \geq R_M(A_1,A_2)$. Let us take the dual of the operations in Eqs.~\eqref{eq:ROI_inst_r1} and \eqref{eq:ROI_inst_r2}, and notice that $(\tr_{\cK_2} \Psi_{xy})^\ast (X) = \Psi^\ast_{xy} (X \otimes \Id_{\cK_2})$ for all $X \in \cL(\cK_1)$ and similarly $(\tr_{\cK_1} \Psi_{xy})^\ast (X) = \Psi^\ast_{xy} (\Id_{\cK_1} \otimes X)$ for all $X \in \cL(\cK_2)$. We then apply the dual of Eq.~\eqref{eq:ROI_inst_r1} to $\Id_{\cK_1}$ and the dual of Eq.~\eqref{eq:ROI_inst_r2} to $\Id_{\cK_2}$, which yields
\begin{align}
\frac{A_1(x) + r \tA_1(x)}{1+r} = \sum_y G(x,y) \\
\frac{A_2(y) + r \tA_2(y)}{1+r} = \sum_x G(x,y), 
\end{align}
where we have defined the POVMs $\tA_1(x) = (\tPhi^1_x)^\ast(\Id_{\cK_1})$, $\tA_2(y) = (\tPhi^2)_y)^\ast(\Id_{\cK_2})$, and $G(x,y) = \Psi^\ast_{xy}(\Id_{\cK_1} \otimes \Id_{\cK_2})$. From the definition of the incompatibility robustness of measurements in Eq.~\eqref{Def:ROI-measurements} it is clear then that $r \ge R_M(A_1,A_2)$.

In summary, we have that $R_I(\{\cI_1,\cI_2\})\geq \max\{R_M(A_1,A_2),R_{C}(\Phi_1,\Phi_2)\}$, which is exactly the statement of the theorem.
\end{proof}

The following theorem establishes a universal upper bound on the incompatibility robustness of two instruments.

\begin{theorem}
Consider two arbitrary instruments $\cI_1=\{\Phi^1_x\}\in\Ins(\Omega_{\cI_1},\cH,\cK_1)$ and $\cI_2=\{\Phi^2_y\}\in\Ins(\Omega_{\cI_2},\cH,\cK_2)$. Then $ R_I(\cI_1,\cI_2) \le 1$.
\end{theorem}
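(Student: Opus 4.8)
The plan is to exhibit an explicit feasible point of the semidefinite program in Definition~\ref{Def:ROI-instruments} at $r=1$; since $R_I$ is defined as a minimization, producing such a point immediately gives $R_I(\cI_1,\cI_2)\le 1$. The guiding idea is a ``half-and-half'' construction for the joint instrument: with probability $\tfrac12$ run the genuine instrument $\cI_1$ on the input while preparing a fixed junk state on $\cK_2$, and with probability $\tfrac12$ run $\cI_2$ while preparing a fixed junk state on $\cK_1$. The classical label missing in each branch is supplied by an independent coin flip, so that the noise one is forced to add on each side turns out to be exactly a trash-and-prepare instrument (Definition~\ref{Def:trash-prep}).

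Concretely, write $\Phi^1:=\sum_x\Phi^1_x$ and $\Phi^2:=\sum_y\Phi^2_y$ for the induced channels, fix arbitrary states $\sigma_1\in\cS(\cK_1)$ and $\sigma_2\in\cS(\cK_2)$ and arbitrary probability distributions $\{p_x\}_x$ and $\{q_y\}_y$, and set
\[
\Psi_{xy}(\rho):=\tfrac12\,q_y\,\Phi^1_x(\rho)\otimes\sigma_2+\tfrac12\,p_x\,\sigma_1\otimes\Phi^2_y(\rho).
\]
Each $\Psi_{xy}$ is a nonnegative combination of CP maps and is therefore CP, so the first thing I would verify is that $\cI=\{\Psi_{xy}\}$ is a legitimate instrument: summing over $x$ and $y$ and using $\sum_xp_x=\sum_yq_y=1$, the total map is $\rho\mapsto\tfrac12\Phi^1(\rho)\otimes\sigma_2+\tfrac12\sigma_1\otimes\Phi^2(\rho)$, which is trace preserving because $\Phi^1,\Phi^2$ are channels and $\sigma_1,\sigma_2$ are normalized.

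Next I would evaluate the two marginal constraints at $r=1$. Applying $\tr_{\cK_2}$ and summing over $y$, the first branch contributes $\tfrac12\Phi^1_x(\rho)$ (using $\sum_yq_y=1$ and $\tr[\sigma_2]=1$), while the second branch contributes $\tfrac12 p_x\sigma_1\tr[\Phi^2(\rho)]=\tfrac12 p_x\sigma_1\tr[\rho]$, where the last equality uses that $\Phi^2$ is trace preserving. Hence
\[
\sum_y\tr_{\cK_2}\Psi_{xy}=\frac{\Phi^1_x+\tPhi^1_x}{2},\qquad \tPhi^1_x(\rho):=p_x\tr[\rho]\,\sigma_1,
\]
which matches the first constraint of Eq.~\eqref{Eq:ROI-instruments}, with $\tI_1=\{\tPhi^1_x\}$ a trash-and-prepare instrument (trivial measurement $T_1(x)=p_x\Id$, fixed output $\sigma_1$) and hence a valid noise instrument. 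By the symmetric computation the second marginal yields $\tPhi^2_y(\rho):=q_y\tr[\rho]\,\sigma_2$.

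The only point requiring genuine care---and the reason $r=1$ is enough---is the effect of the partial trace on the ``prepared'' branch: $\tr_{\cK_2}[\sigma_1\otimes\Phi^2_y(\rho)]=\sigma_1\tr[\Phi^2_y(\rho)]$ must assemble, after summation over $y$, into a bona fide trash-and-prepare operation, and this works precisely because the complementary operations sum to a \emph{channel} and are therefore collectively trace preserving. With $\cI,\tI_1,\tI_2$ all valid instruments and both constraints of Definition~\ref{Def:ROI-instruments} satisfied at $r=1$, the triple is a feasible point, establishing $R_I(\cI_1,\cI_2)\le 1$.
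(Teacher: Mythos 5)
Your proof is correct and follows essentially the same approach as the paper: the paper's own construction is exactly your ``half-and-half'' joint instrument $\Psi_{xy}(\rho)=\tfrac12\bigl(\Phi^1_x(\rho)\otimes\tfrac{1}{|\Omega_{\cI_2}|}\eta_2+\tfrac{1}{|\Omega_{\cI_1}|}\eta_1\otimes\Phi^2_y(\rho)\bigr)$, i.e., your construction specialized to uniform distributions $p_x,q_y$ and pure junk states, with the same identification of the resulting noise as trash-and-prepare instruments at $r=1$. Your version with arbitrary distributions and arbitrary (possibly mixed) states $\sigma_1,\sigma_2$ is a mild generalization but contains no new idea beyond the paper's argument.
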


\begin{proof}
Consider two quantum states $\eta_1=\ket{\eta_1}\bra{\eta_1}\in\cS(\cK_1)$ and $\eta_2=\ket{\eta_2}\bra{\eta_2}\in\cS(\cK_2)$.
Let us define a quantum instrument $\cI=\{\Psi_{xy}\}$, where $x\in\Omega_{\cI_1}$ and $y\in\Omega_{\cI_2}$, given by its action
\begin{align}
\Psi_{xy}(\rho)=\frac{1}{2} \left(\Phi^1_x(\rho)\otimes\frac{1}{|\Omega_{\cI_2}|}\eta_2+\frac{1}{|\Omega_{\cI_1}|}\eta_1\otimes\Phi^2_y(\rho) \right),
\end{align}
which clearly defines a quantum instrument.

Now, consider the quantum instruments $\tI^{1}=\{\tPhi^{1}_x=\sum_{y\in\Omega_{\cI_2}}\tr_{\cK_2}\Psi_{xy}=\frac{\Phi^1_x+\Phi^{\eta_1}_x}{2}\}_{x\in\Omega_{\cI_1}}$ and $\tI^{2}=\{\tPhi^{2}_y=\sum_{x\in\Omega_{\cI_1}}\tr_{\cK_1}\Psi_{xy}=\frac{\Phi^2_y+\Phi^{\eta_2}_y}{2}\}_{y\in\Omega_{\cI_2}}$ where $\mathfrak{T}_{\eta_1}=\{\Phi^{\eta_1}_x\}_{x\in\Omega_{\cI_1}}$ and $\mathfrak{T}_{\eta_2}=\{\Phi^{\eta_2}_y\}_{y\in\Omega_{\cI_2}}$ are  the trash-and-prepare instruments defined via $\Phi^{\eta_1}_x(\rho)=\frac{\tr(\rho)}{|\Omega_{\cI_1}|}\eta_1$ and $\Phi^{\eta_2}_y(\rho)=\frac{\tr(\rho)}{|\Omega_{\cI_2}|}\eta_2$ for all $\rho\in\cS(\cH)$, $x\in\Omega_{\cI_1}$ and $y\in\Omega_{\cI_2}$. By definition, $\tI_1$ and $\tI_2$ are compatible via the joint instrument $\cI$. Furthermore, they are noisy versions of the instruments $\cI_1$ and $\cI_2$ with noise parameter $r = 1$ and noise instruments $\mathfrak{T}_{\eta_1}$ and $\mathfrak{T}_{\eta_2}$. Therefore, from the definition of incompatibility robustness in Definition~\ref{Def:ROI-instruments}, we get that $R_I(\cI_1,\cI_2) \le 1$.

\end{proof}

\subsection{Post-processing as a free operation}

In the next theorem we prove that post-processing preserves the compatibility of quantum instruments. That is, one may view post-processing as a free operation in a potential resource theory of incompatibility.

\begin{theorem}
Consider two arbitrary instruments $\cI_1=\{\Phi^1_x\}\in\Ins(\Omega_{\cI_1},\cH,\cK_1)$ and $\cI_2=\{\Phi^2_y\}\in\Ins(\Omega_{\cI_2},\cH,\cK_2)$ and  another two instruments $\tI_1=\{\tPhi^{1}_{\tx}\}\in\Ins(\Omega_{\tI_1},\cH,\tK_1)$ and $\tI_2=\{\tPhi^{2}_{\ty}\}\in\Ins(\Omega_{\tI_2},\cH,\tK_2)$ such that $\tI_1\lesssim\cI_1$ and $\tI_2\lesssim\cI_2$. Then, if $\cI_1$ and $\cI_2$ are compatible then $\tI_1$ and $\tI_2$ are also compatible. \label{Th:post-proc-in-comp-pres}
\end{theorem}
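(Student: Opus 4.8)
The plan is to exhibit an explicit joint instrument for $\tI_1$ and $\tI_2$, obtained by post-processing the joint instrument that witnesses the compatibility of $\cI_1$ and $\cI_2$. By hypothesis, compatibility of $\cI_1$ and $\cI_2$ gives a joint instrument $\cI=\{\Psi_{xy}\}\in\Ins(\Omega_{\cI_1}\times\Omega_{\cI_2},\cH,\cK_1\otimes\cK_2)$ with $\sum_y\tr_{\cK_2}\Psi_{xy}=\Phi^1_x$ and $\sum_x\tr_{\cK_1}\Psi_{xy}=\Phi^2_y$. The relations $\tI_1\lesssim\cI_1$ and $\tI_2\lesssim\cI_2$ provide instruments $\{\cR^x=\{R^x_{\tx}\}_{\tx}\in\Ins(\Omega_{\tI_1},\cK_1,\tK_1)\}_x$ and $\{\cS^y=\{S^y_{\ty}\}_{\ty}\in\Ins(\Omega_{\tI_2},\cK_2,\tK_2)\}_y$ such that $\tPhi^1_{\tx}=\sum_x R^x_{\tx}\circ\Phi^1_x$ and $\tPhi^2_{\ty}=\sum_y S^y_{\ty}\circ\Phi^2_y$.

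First I would define the candidate joint instrument $\cI'=\{\Psi'_{\tx\ty}\}$ by applying the two post-processing families in parallel on each branch of $\cI$,
\begin{equation}
\Psi'_{\tx\ty}=\sum_{x,y}\left(R^x_{\tx}\otimes S^y_{\ty}\right)\circ\Psi_{xy}.
\end{equation}
Each $\Psi'_{\tx\ty}$ is completely positive, being a sum of compositions and tensor products of CP maps, so the next step is to verify that $\cI'$ is a genuine instrument, i.e.\ that $\sum_{\tx,\ty}\Psi'_{\tx\ty}$ is trace preserving. Summing over all outcomes replaces $\sum_{\tx}R^x_{\tx}$ and $\sum_{\ty}S^y_{\ty}$ by the channels $R^x$ and $S^y$ (the total maps of $\cR^x$ and $\cS^y$); since each $R^x\otimes S^y$ is then trace preserving and $\{\Psi_{xy}\}$ already sums to a channel, the output trace equals $\tr\rho$ for every input $\rho$.

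The crux is the marginal computation. Here I would use the identity $\tr_{\tK_2}[(R^x_{\tx}\otimes S^y)(\sigma)]=R^x_{\tx}(\tr_{\cK_2}\sigma)$, valid for every $\sigma\in\cL^+(\cK_1\otimes\cK_2)$ whenever $S^y=\sum_{\ty}S^y_{\ty}$ is trace preserving, which one checks on product operators and extends by linearity. Summing $\tr_{\tK_2}\Psi'_{\tx\ty}$ over $\ty$ collapses the second post-processing to its channel $S^y$, and this identity converts the trace over $\tK_2$ into a trace over $\cK_2$ inside $\Psi_{xy}$, giving
\begin{equation}
\sum_{\ty}\tr_{\tK_2}\Psi'_{\tx\ty}=\sum_x R^x_{\tx}\circ\Big(\sum_y\tr_{\cK_2}\Psi_{xy}\Big)=\sum_x R^x_{\tx}\circ\Phi^1_x=\tPhi^1_{\tx},
\end{equation}
where the middle equality is the marginal relation for $\cI$ and the last is the post-processing relation for $\tI_1$. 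The second marginal $\sum_{\tx}\tr_{\tK_1}\Psi'_{\tx\ty}=\tPhi^2_{\ty}$ follows by the symmetric argument, so $\cI'$ is a joint instrument for $\tI_1$ and $\tI_2$.

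The step I expect to require the most care is the bookkeeping around validity and marginalization: attaching the tensor factors of the post-processing instruments to the correct output systems (the $x$-branch acting $\cK_1\to\tK_1$ and the $y$-branch acting $\cK_2\to\tK_2$), and justifying the partial-trace identity that lets the local post-processing commute through $\tr_{\tK_2}$, which works precisely because the summed-out instrument reduces to a trace-preserving channel. The remaining arguments are routine linearity and complete positivity.
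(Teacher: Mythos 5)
Your proposal is correct and follows essentially the same route as the paper's proof: the same joint instrument $\sum_{x,y}(R^x_{\tilde x}\otimes S^y_{\tilde y})\circ\Psi_{xy}$, and the same key identity that summing out one post-processing family yields a trace-preserving channel, allowing the partial trace over the new output space to be pulled back to the old one. Your explicit check that the construction is a genuine instrument is a minor addition the paper leaves implicit, but the argument is identical in substance.
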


\begin{proof}
Since $\cI_1$ and $\cI_2$ are compatible, there exists an instrument $\cI=\{\Phi_{xy}\}$ such that Eqs.~\eqref{Eq:def-par-comp-1} and \eqref{Eq:def-par-comp-2} hold. Also, since $\tI_1\lesssim\cI_1$ and $\tI_2\lesssim\cI_2$, there exist two sets of instruments $\{\cR^{1,x}=\{R^{1,x}_{\tx}\}\in\Ins({\Omega}_{\tI_1},\cK_1,\tK_1)\}_{x\in\Omega_{\cI_1}}$ and $\{\cR^{2,y}=\{R^{2,y}_{\ty}\}\in\Ins({\Omega}_{\tI_2},\cK_2,\tK_2)\}_{y\in\Omega_{\cI_2}}$ such that

\begin{align}
\tPhi^{1}_{\tx}=\sum_{x\in\Omega_{\cI_1}} R^{1,x}_{\tx}\circ\Phi^1_{x} \quad  \forall\tx\in\Omega_{\tI_1} \label{Eq:post-proc-I-1-p},\\
\tPhi^{2}_{\ty}=\sum_{y\in\Omega_{\cI_2}} R^{2,y}_{\ty}\circ\Phi^2_{y} \quad \forall \ty\in\Omega_{\tI_2} \label{Eq:post-proc-I-2-p}.
\end{align} 
Let us define the instrument $\tI=\{\tPhi_{\tx \ty}\}_{\tx\in\Omega_{\tI_1},\ty\in\Omega_{\tI_2}}$ given by its action
 \begin{align}
\tPhi_{\tx \ty} = \sum_{x \in \Omega_{\cI_1}} \sum_{y \in \Omega_{\cI_2}} (R^{1,x}_{\tx} \otimes R^{2,y}_{\ty}) \circ \Phi_{xy}.
 \end{align}
Since $\sum_{\tx \in \Omega_{\tI_1}} R^{1,x}_{\tx}=:R^{1,x}$ and $\sum_{\ty \in \Omega_{\tI_2}} R^{2,y}_{\ty} =: R^{2,y}$ are both quantum channels and therefore, trace preserving, we have that
\begin{align}
\sum_{\ty \in \Omega_{\tI_2}} \tr_{\tK_2} (R^{1,x}_{\tx} \otimes R^{2,y}_{\ty}) (.) = R^{1,x}_{\tx} \circ \tr_{\cK_2} (.) \\
\sum_{\tx \in \Omega_{\tI_1}} \tr_{\tK_1} (R^{1,x}_{\tx} \otimes R^{2,y}_{\ty}) (.) = R^{2,y}_{\ty} \circ \tr_{\cK_1} (.).
\end{align}
Using these relations, it is easy to verify that
\begin{align}
\sum_{\ty \in \Omega_{\tI_2}} \tr_{\tK_2} \tPhi_{\tx \ty}=\tPhi^{1}_{\tx} \\
\sum_{\tx \in \Omega_{\tI_1}} \tr_{\tK_1} \tPhi_{\tx \ty}=\tPhi^{2}_{\ty}.
\end{align}
Therefore, $\tI_1$ and $\tI_2$ are compatible.
\end{proof}

As a consequence of the above theorem, given a pair of compatible instruments, every pair of instruments post-processing equivalent to this pair is also compatible:

\begin{corollary}\label{cor:compatible_equiv}
Consider two arbitrary instruments $\cI_1$ and $\cI_2$, and another two instruments $\tI_1$ and $\tI_2$ such that $\tI_1 \sim \cI_1$ and $\tI_2 \sim \cI_2$. Then, $\cI_1$ and $\cI_2$ are compatible if and only if $\tI_1$ and $\tI_2$ are compatible.
\end{corollary}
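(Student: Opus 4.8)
The plan is to derive this biconditional directly from Theorem~\ref{Th:post-proc-in-comp-pres}, since post-processing is already established there as a compatibility-preserving operation. The only work is to unpack the equivalence relation $\sim$ into its two constituent directed relations and then invoke the theorem once in each direction. Recall that by definition $\tI_1 \sim \cI_1$ means both $\tI_1 \lesssim \cI_1$ and $\cI_1 \lesssim \tI_1$, and similarly $\tI_2 \sim \cI_2$ gives $\tI_2 \lesssim \cI_2$ and $\cI_2 \lesssim \tI_2$.

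For the forward implication, I would assume $\cI_1$ and $\cI_2$ are compatible. Extracting the relations $\tI_1 \lesssim \cI_1$ and $\tI_2 \lesssim \cI_2$ from the two equivalences, I would apply Theorem~\ref{Th:post-proc-in-comp-pres} with $\cI_1,\cI_2$ in the role of the ``resource'' pair and $\tI_1,\tI_2$ in the role of the post-processed pair; the theorem then yields that $\tI_1$ and $\tI_2$ are compatible. For the reverse implication, I would use the \emph{other} halves of the equivalences, namely $\cI_1 \lesssim \tI_1$ and $\cI_2 \lesssim \tI_2$, together with the assumed compatibility of $\tI_1,\tI_2$, and apply the same theorem with the roles of the two pairs interchanged to conclude that $\cI_1$ and $\cI_2$ are compatible.

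Since the two directions are perfectly symmetric, I expect no genuine obstacle: the argument is a purely formal consequence of the fact that $\sim$ is defined as two-sided post-processing combined with the monotonicity established in the preceding theorem. The single point requiring care is simply to match the directions of $\lesssim$ correctly to the hypotheses of Theorem~\ref{Th:post-proc-in-comp-pres} in each case, so that the ``more refined'' instrument in each application is indeed the one being post-processed \emph{from}.
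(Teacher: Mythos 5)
Your proposal is correct and matches the paper's (implicit) argument exactly: the corollary is stated there as a direct consequence of Theorem~\ref{Th:post-proc-in-comp-pres}, obtained precisely by unpacking each $\sim$ into its two $\lesssim$ relations and applying the theorem once in each direction. Your attention to matching the direction of $\lesssim$ with the theorem's hypotheses is the only subtlety, and you handle it correctly.
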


In the following, we will show that trash-and-prepare instruments are compatible with any other instrument with the same input space. As a preparation, we prove the following lemma, noting that a quantum channel can be considered as one-outcome instrument.
\begin{lemma}\label{prop:identity_trash-and-prepare}
The identity channel $\IdC_{\cH}\in\mathbbm{Ch}(\cH,\cH)$, viewed as a one-outcome instrument $\IdC_{\cH}\in\Ins(\{1\},\cH,\cH)$ is compatible with any trash-and-prepare instrument $\mathfrak{J^{\prime}}_{T}\in\mathbbm{In}(\Omega_{\mathfrak{J^{\prime}}_{T}},\cH,\cK)$.
\end{lemma}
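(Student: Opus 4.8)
The plan is to prove compatibility by exhibiting an explicit joint instrument. First I would unpack the two devices to be implemented simultaneously. Since $T$ is a trivial measurement, $T(z)=p_z\Id_{\cH}$ for some probability distribution $\{p_z\}_{z\in\Omega_T}$, so the components of the trash-and-prepare instrument are $J^{\prime T}_z(\rho)=\tr[\rho\,T(z)]\rho^{\prime}_z=p_z\,\tr[\rho]\,\rho^{\prime}_z$, whereas the identity instrument has the single component $\IdC_{\cH}(\rho)=\rho$ with outcome labelled $1$. Because the first instrument has only one outcome, a candidate joint instrument $\cI=\{\Psi_{1,z}\}_{z\in\Omega_T}\in\Ins(\{1\}\times\Omega_T,\cH,\cH\otimes\cK)$ must satisfy only $\sum_z\tr_{\cK}\Psi_{1,z}=\IdC_{\cH}$ and $\tr_{\cH}\Psi_{1,z}=J^{\prime T}_z$ for every $z$, where the second condition comes from the (trivial) sum over the single outcome of $\IdC_{\cH}$.

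The construction I would verify is
\begin{align}
\Psi_{1,z}(\rho)=p_z\,\rho\otimes\rho^{\prime}_z .
\end{align}
Each $\Psi_{1,z}$ is completely positive, being (up to the scalar $p_z$) a tensor product of the identity channel with a fixed state preparation, and $\sum_z\Psi_{1,z}(\rho)=\rho\otimes\big(\sum_z p_z\rho^{\prime}_z\big)$ is trace preserving because $\sum_z p_z=1$ and $\tr[\rho^{\prime}_z]=1$. Hence $\{\Psi_{1,z}\}$ is a bona fide quantum instrument.

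It then remains to check the two marginal conditions of Definition~\ref{Def.parallel_comp}. Tracing out $\cK$ and summing over $z$ gives $\sum_z\tr_{\cK}\Psi_{1,z}(\rho)=\sum_z p_z\,\rho\,\tr[\rho^{\prime}_z]=\rho=\IdC_{\cH}(\rho)$, which recovers the identity marginal. Tracing out $\cH$ gives $\tr_{\cH}\Psi_{1,z}(\rho)=p_z\,\tr[\rho]\,\rho^{\prime}_z=J^{\prime T}_z(\rho)$, which recovers the trash-and-prepare marginal.

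I expect no serious obstacle here, as the statement reduces to a direct construction. The only points requiring care are retaining the trace factors $\tr[\rho]$ so that the identities hold as equalities of quantum operations on all of $\cS(\cH)$ (and, by linear extension, on sub-normalized inputs), and correctly exploiting the trivial-measurement structure $T(z)=p_z\Id$ so that the trash-and-prepare output factors cleanly as the weight $p_z$ times the preparation $\rho^{\prime}_z$. This lemma is meant to serve as the base case for the more general claim that any trash-and-prepare instrument is compatible with an arbitrary instrument on the same input space.
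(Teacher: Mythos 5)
Your proposal is correct and matches the paper's proof essentially verbatim: the paper exhibits the same joint instrument $\Phi'_{x1}(\rho) = p_x\,\eta_x \otimes \rho$ (differing from your $\Psi_{1,z}$ only in the order of the tensor factors) and concludes compatibility from Definition~\ref{Def.parallel_comp}. Your explicit verification of complete positivity, trace preservation, and the two marginal conditions is slightly more detailed than the paper's, but the construction and argument are the same.
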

\begin{proof}
Let us denote the fixed outcome of the identity channel $\IdC_{\cH}$ by `$1$'. Consider the trivial measurement $T=\{T(x)=p_x\Id\}$, where $\{p_x\}$ is a probability distribution. Then, $\mathfrak{J^{\prime}}_{T}$ has the form $\mathfrak{J^{\prime}}_{T}=\{\Phi_x\}$ where for any quantum state $\rho\in\cS(\cH)$, $\Phi_x(\rho)=p_x\eta_x$ where $\eta_x\in\cS(\cK)$ for all $x\in\Omega_{\mathfrak{J^{\prime}}_{T}}$. Now consider the quantum instrument $\cI=\{\Phi^{\prime}_{x1}\}$ where where for any quantum state $\rho\in\cS(\cH)$
\begin{align}
\Phi^{\prime}_{x1}(\rho)=p_x\eta_x\otimes\rho
\end{align}
for all $x\in\Omega_{\mathfrak{J^{\prime}}_{T}}$. Then, from Eqs.~\eqref{Eq:def-par-comp-1} and \eqref{Eq:def-par-comp-2}, we get that $\cI$ is a joint instrument for $\IdC_{\cH}$ and $\mathfrak{J^{\prime}}_{T}$ and hence, they are compatible.
\end{proof}

From the fact that any instrument can be post-processed from the identity channel \cite{LS21}, Theorem \ref{Th:post-proc-in-comp-pres} and Lemma \ref{prop:identity_trash-and-prepare}, the following proposition follows directly.

\begin{proposition}\label{cor:compatible_trash-prep}
Any instrument $\cI\in\Ins(\Omega_{\cI},\cH,\cK)$ is compatible with any trash-and-prepare instrument $\mathfrak{J^{\prime}}_{T}\in\Ins(\Omega_{\mathfrak{J^{\prime}}_{T}},\cH,\cK^{\prime})$.
\end{proposition}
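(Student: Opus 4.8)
The plan is to obtain the proposition directly by combining three ingredients: the reflexivity of the post-processing preorder, the fact that every instrument lies below the identity channel in this preorder, and the two results already established, namely Lemma~\ref{prop:identity_trash-and-prepare} and Theorem~\ref{Th:post-proc-in-comp-pres}. Concretely, I would instantiate Theorem~\ref{Th:post-proc-in-comp-pres} with the \emph{reference} pair $\cI_1 = \IdC_\cH$ (the identity channel viewed as a one-outcome instrument) and $\cI_2 = \mathfrak{J^{\prime}}_{T}$, and with the \emph{target} pair $\tI_1 = \cI$ and $\tI_2 = \mathfrak{J^{\prime}}_{T}$. If the two required post-processing relations hold and the reference pair is compatible, the theorem immediately delivers compatibility of $\cI$ and $\mathfrak{J^{\prime}}_{T}$.

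First I would verify the two post-processing hypotheses of the theorem. The relation $\mathfrak{J^{\prime}}_{T} \lesssim \mathfrak{J^{\prime}}_{T}$ holds trivially by reflexivity of $\lesssim$, taking the identity post-processing. The relation $\cI \lesssim \IdC_\cH$ is the statement that every instrument can be post-processed from the identity channel \cite{LS21}; it is in fact immediate from the definition, since the single post-processing instrument $\cR^{1} = \{ R^{1}_x = \Phi_x \}_{x \in \Omega_\cI} \in \Ins(\Omega_\cI, \cH, \cK)$ satisfies $R^{1}_x \circ \IdC_\cH = \Phi_x$ for all $x$, matching Eq.~\eqref{eq:def-post-proc-in} for the one-element outcome set of the identity channel.

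Next I would invoke Lemma~\ref{prop:identity_trash-and-prepare} to conclude that the reference pair $(\IdC_\cH, \mathfrak{J^{\prime}}_{T})$ is compatible. Applying Theorem~\ref{Th:post-proc-in-comp-pres} to this compatible pair together with the two post-processing relations $\cI \lesssim \IdC_\cH$ and $\mathfrak{J^{\prime}}_{T} \lesssim \mathfrak{J^{\prime}}_{T}$ then yields that $(\cI, \mathfrak{J^{\prime}}_{T})$ is compatible, which is exactly the claim.

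The only point requiring care---and what I expect to be the main, if modest, obstacle---is the bookkeeping of the Hilbert spaces: the identity channel acts as $\cH \to \cH$, whereas $\cI$ has output space $\cK$, so the post-processing instrument $\cR^{1}$ must carry $\cH$ to $\cK$, which is precisely what the definition of post-processing permits. I would double-check that this change of output space is consistent with Theorem~\ref{Th:post-proc-in-comp-pres}, where $\tI_1$ is allowed an output space $\tK_1$ independent of $\cK_1$; here $\cK_1 = \cH$ and $\tK_1 = \cK$, so no conflict arises, and the argument goes through verbatim.
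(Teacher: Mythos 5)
Your proof is correct and follows exactly the paper's route: the paper likewise derives the proposition directly from the fact that any instrument is a post-processing of the identity channel, Lemma~\ref{prop:identity_trash-and-prepare}, and Theorem~\ref{Th:post-proc-in-comp-pres}. Your write-up merely makes explicit the details (the explicit post-processing instrument $\cR^1 = \cI$, reflexivity of $\lesssim$, and the Hilbert-space bookkeeping) that the paper leaves implicit.
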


The following theorem establishes an even stronger relation between incompatibility robustness and the post-processing preorder: the incompatibility robustness of quantum instruments is monotonic under post-processing.

\begin{theorem}
Consider two arbitrary instruments $\cI_1 = \{ \Phi^1_x \} \in \Ins (\Omega_{\cI_1}, \cH, \cK_1)$ and $\cI_2=\{\Phi^2_y\}\in\Ins( \Omega_{\cI_2}, \cH, \cK_2)$, and another two instruments $\tI_1 = \{ \tPhi^{1}_{\tx} \} \in \Ins ( \Omega_{\tI_1}, \cH, \tK_1)$ and $\tI_2= \{ \tPhi^{2}_{\ty} \} \in \Ins( \Omega_{\tI_2}, \cH, \tK_2)$ such that $\tI_1 \lesssim \cI_1$ and $\tI_2 \lesssim \cI_2$. Then, $R_I( \tI_1,\tI_2 ) \le R_I( \cI_1,\cI_2 )$. \label{Th:post-proc-in-comp-ROI}
\end{theorem}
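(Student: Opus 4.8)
The plan is to establish monotonicity by transporting an optimal noisy decomposition for $R_I(\cI_1,\cI_2)$ through the very same post-processing maps that witness $\tI_1\lesssim\cI_1$ and $\tI_2\lesssim\cI_2$, while keeping the noise parameter unchanged. Concretely, set $r=R_I(\cI_1,\cI_2)$ (the minimum is attained, since the feasible set is compact and $r\le 1$ by the previous theorem) and let $\{N^1_x\}\in\Ins(\Omega_{\cI_1},\cH,\cK_1)$, $\{N^2_y\}\in\Ins(\Omega_{\cI_2},\cH,\cK_2)$ be optimal noise instruments and $\{\Psi_{xy}\}\in\Ins(\Omega_{\cI_1}\times\Omega_{\cI_2},\cH,\cK_1\otimes\cK_2)$ an optimal joint instrument, so that
\begin{align}
\frac{\Phi^1_x + r N^1_x}{1+r} = \sum_y \tr_{\cK_2}\Psi_{xy} \quad \forall x, \\
\frac{\Phi^2_y + r N^2_y}{1+r} = \sum_x \tr_{\cK_1}\Psi_{xy} \quad \forall y.
\end{align}
I would then fix the post-processing instruments $\{R^{1,x}_{\tx}\}$ and $\{R^{2,y}_{\ty}\}$ from Eqs.~\eqref{Eq:post-proc-I-1-p}--\eqref{Eq:post-proc-I-2-p} and use them to build a candidate feasible point for $R_I(\tI_1,\tI_2)$ at the \emph{same} value $r$.

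For the joint instrument I would reuse the construction in the proof of Theorem~\ref{Th:post-proc-in-comp-pres}, setting $\tilde\Psi_{\tx\ty}=\sum_{x,y}(R^{1,x}_{\tx}\otimes R^{2,y}_{\ty})\circ\Psi_{xy}$; for the noise I would post-process the optimal noise of $\cI_1,\cI_2$, setting $\tilde N^1_{\tx}=\sum_x R^{1,x}_{\tx}\circ N^1_x$ and $\tilde N^2_{\ty}=\sum_y R^{2,y}_{\ty}\circ N^2_y$. Since the post-processing of an instrument is again an instrument, all three objects $\tilde\Psi$, $\{\tilde N^1_{\tx}\}$ and $\{\tilde N^2_{\ty}\}$ are legitimate instruments with the correct input and output spaces, so they are admissible in the program of Definition~\ref{Def:ROI-instruments}.

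The verification then reduces to the two marginal identities already established in the proof of Theorem~\ref{Th:post-proc-in-comp-pres}, namely $\sum_{\ty}\tr_{\tK_2}(R^{1,x}_{\tx}\otimes R^{2,y}_{\ty})=R^{1,x}_{\tx}\circ\tr_{\cK_2}$ and $\sum_{\tx}\tr_{\tK_1}(R^{1,x}_{\tx}\otimes R^{2,y}_{\ty})=R^{2,y}_{\ty}\circ\tr_{\cK_1}$, which hold because $\sum_{\tx}R^{1,x}_{\tx}$ and $\sum_{\ty}R^{2,y}_{\ty}$ are trace preserving. Applying the first identity to $\tilde\Psi$, swapping the finite sums, and inserting the $x$-marginal constraint for $\Psi$ yields
\begin{align}
\sum_{\ty}\tr_{\tK_2}\tilde\Psi_{\tx\ty} = \sum_x R^{1,x}_{\tx}\circ\frac{\Phi^1_x + r N^1_x}{1+r} = \frac{\tPhi^1_{\tx} + r\tilde N^1_{\tx}}{1+r},
\end{align}
where the last step uses linearity together with the post-processing relation $\tPhi^1_{\tx}=\sum_x R^{1,x}_{\tx}\circ\Phi^1_x$ and the definition of $\tilde N^1_{\tx}$. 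The symmetric computation gives $\sum_{\tx}\tr_{\tK_1}\tilde\Psi_{\tx\ty}=(\tPhi^2_{\ty}+r\tilde N^2_{\ty})/(1+r)$. Hence $r$, together with the noise instruments $\{\tilde N^1_{\tx}\}$, $\{\tilde N^2_{\ty}\}$ and the joint instrument $\tilde\Psi$, is feasible for the program defining $R_I(\tI_1,\tI_2)$, so $R_I(\tI_1,\tI_2)\le r=R_I(\cI_1,\cI_2)$.

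The only genuinely delicate point is bookkeeping rather than mathematics: because the paper's $\tPhi$ notation is overloaded, one must carefully keep the optimal noise $\{N^i\}$ of the pair $(\cI_1,\cI_2)$ distinct from the post-processed instruments $\tI_i$, and then recognise that the same maps $R^{1,x}_{\tx}, R^{2,y}_{\ty}$ that send $\cI_i$ to $\tI_i$ simultaneously send the noise and the joint instrument to valid witnesses while leaving the scalar $r$ untouched. No new inequality or optimization is required; monotonicity falls out purely because post-processing acts linearly and commutes appropriately with the partial traces, so it maps any feasible point of the $R_I(\cI_1,\cI_2)$ program to a feasible point of the $R_I(\tI_1,\tI_2)$ program at the same noise level.
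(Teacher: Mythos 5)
Your proposal is correct and takes essentially the same route as the paper: the paper also transports an optimal noisy feasible point of $R_I(\cI_1,\cI_2)$ through the post-processing maps $R^{1,x}_{\tx}$ and $R^{2,y}_{\ty}$, defining the post-processed noise $\tilde{J}^{1}_{\tx} = \sum_{x} R^{1,x}_{\tx} \circ J^1_{x}$ (your $\tilde N^1_{\tx}$) and using linearity to identify the post-processing of the noisy instruments with the noisy versions of $\tI_1,\tI_2$ at the same $r$. The only difference is presentational: the paper invokes Theorem~\ref{Th:post-proc-in-comp-pres} as a black box to conclude compatibility of the post-processed noisy instruments, whereas you inline that theorem's joint-instrument construction and marginal verification explicitly.
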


\begin{proof}

Since $\tI_1 \lesssim \cI_1$ and $\tI_2 \lesssim \cI_2$, there exist two sets of instruments $\{\cR^{1,x}=\{R^{1,x}_{\tx} \} \in\Ins ( {\Omega}_{\tI_1}, \cK_1, \tK_1) \}_{x\in\Omega_{\cI_1}}$ and $\{\cR^{2,y}=\{R^{2,y}_{\ty} \} \in \Ins ({\Omega}_{\tI_2}, \cK_2, \tK_2) \}_{y\in\Omega_{\cI_2}}$ such that Eqs.~\eqref{Eq:post-proc-I-1-p} and \eqref{Eq:post-proc-I-2-p} hold.

Suppose that $R_I( \cI_1,\cI_2 ) = r$. Then there exist two instruments $\{J^1_x\}\in\Ins(\Omega_{\cI_1},\cH,\cK_1)$ and $\{J^2_y\}\in\Ins(\Omega_{\cI_2},\cH,\cK_2)$ such that the instruments $\cJ_1:=\{\frac{\Phi^1_x+ r J^1_x}{1+r}\}\in\Ins(\Omega_{\cI_1},\cH,\cK_1)$ and $\cJ_2:=\{\frac{\Phi^2_x+ r J^2_x}{1+r}\}\in\Ins(\Omega_{\cI_2},\cH,\cK_2)$ are compatible. Consider the post-processings of $\cJ_1$ and $\cJ_2$ given by $\tilde{\cJ}_1=\{\frac{\tPhi^{1}_{\tx}+ r \tilde{J}^{1}_{\tx}}{1+r}\}\in\Ins(\Omega_{\tI_1},\cH,\tK_1)$ and $\tilde{\cJ}_2=\{\frac{\tPhi^{2}_{\ty}+ r \tilde{J}^{2}_{\ty}}{1+r}\}\in\Ins(\Omega_{\tI_2}, \cH, \tK_2)$, where
\begin{align}
\tilde{J}^{1}_{\tx} = \sum_{x\in\Omega_{\cI_1}} R^{1,x}_{\tx} \circ J^1_{x},\\
\tilde{J}^{2}_{\ty} = \sum_{y\in\Omega_{\cI_2}} R^{2,y}_{\ty} \circ J^2_{y}.
\end{align} 
Then from Theorem \ref{Th:post-proc-in-comp-pres}, we get that $\tilde{\cJ}_1$ and $\tilde{\cJ}_2$ are compatible, since $\tilde{\cJ}_1 \lesssim \cJ_1$ and $\tilde{\cJ}_2 \lesssim \cJ_2$. Thus, from the definition of the incompatibility robustness of instruments, we get that $R_I( \tI_1,\tI_2 ) \le r$.

\end{proof}

As a consequence of the above theorem, every post-processing equivalent pair of instruments has the same incompatibility robustness:

\begin{corollary}
Consider two arbitrary instruments $\cI_1$ and $\cI_2$, and another two instruments $\tI_1$ and $\tI_2$ such that $\tI_1 \sim \cI_1$ and $\tI_2 \sim \cI_2$. Then, $R_I(\cI_1, \cI_2 )= R_I( \tI_1, \tI_2 )$. \label{Coro:post-proc-in-comp-ROI}
\end{corollary}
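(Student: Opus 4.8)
The plan is to obtain the claimed equality by applying the monotonicity established in Theorem~\ref{Th:post-proc-in-comp-ROI} twice, once in each direction. The key observation is that the equivalence relation $\sim$ is, by definition, nothing but the conjunction of two post-processing relations: $\tI_1 \sim \cI_1$ means both $\tI_1 \lesssim \cI_1$ and $\cI_1 \lesssim \tI_1$, and likewise $\tI_2 \sim \cI_2$ unpacks into $\tI_2 \lesssim \cI_2$ and $\cI_2 \lesssim \tI_2$. Since Theorem~\ref{Th:post-proc-in-comp-ROI} gives a one-directional inequality for the incompatibility robustness under post-processing, having post-processing in both directions will squeeze the two robustnesses together.

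Concretely, I would first invoke the relations $\tI_1 \lesssim \cI_1$ and $\tI_2 \lesssim \cI_2$ and feed them directly into Theorem~\ref{Th:post-proc-in-comp-ROI}, which yields $R_I(\tI_1,\tI_2) \le R_I(\cI_1,\cI_2)$. Then I would repeat the argument with the roles of the pairs interchanged: using the reverse relations $\cI_1 \lesssim \tI_1$ and $\cI_2 \lesssim \tI_2$ that are also guaranteed by $\sim$, the same theorem applied to the pair $(\tI_1,\tI_2)$ as the ``larger'' instruments gives $R_I(\cI_1,\cI_2) \le R_I(\tI_1,\tI_2)$. Combining the two inequalities forces equality, $R_I(\cI_1,\cI_2) = R_I(\tI_1,\tI_2)$, which is exactly the statement of the corollary.

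There is essentially no technical obstacle here, since all the real work is done inside Theorem~\ref{Th:post-proc-in-comp-ROI}; the corollary is a formal consequence of the antisymmetric use of a monotone quantity along a preorder. The only point requiring minor care is to confirm that Theorem~\ref{Th:post-proc-in-comp-ROI} is genuinely applicable in both directions, i.e. that it places no hypothesis on the instruments beyond the post-processing relation itself (it does not), so that swapping which pair plays the role of the post-processed instruments is legitimate. Given that, the proof reduces to citing the theorem twice and noting that $x \le y$ together with $y \le x$ gives $x = y$.
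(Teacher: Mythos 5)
Your proposal is correct and matches the paper's intended argument: the paper presents this corollary as a direct consequence of Theorem~\ref{Th:post-proc-in-comp-ROI}, and the implicit proof is exactly your two-sided application of the monotonicity, using $\tI_i \sim \cI_i$ to get $\tI_i \lesssim \cI_i$ and $\cI_i \lesssim \tI_i$ and hence both inequalities between the robustnesses.
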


One might also consider other operations that preserve compatibility in a potential resource theory of incompatibility. Here we show that the operation that was shown to be a free operation for traditional compatibility \cite{JC21} is not a free operation for parallel compatibility.

Ref.~\cite{JC21} uses the terminology \textit{programmable instrument devices} (PIDs) for a set of instruments that induce the same quantum channel. Furthermore, a PID is called \textit{simple} if the instruments that comprise the PID are traditionally compatible. Hence, since traditionally compatible instruments induce the same quantum channel \cite{MF22}, simple PIDs correspond exactly to traditionally compatible instruments. The authors of Ref.~\cite{JC21} introduce a family of maps that map PIDs to PIDs, moreover, map simple PIDs to simple PIDs, and they call this family \textit{free PID supermaps}. For the sake of clarity, we recall this definition here:

\begin{definition}
A free PID supermap maps a set of instruments $\big\{ \cI_i = \{ \Phi^i_x \}_{x \in \Omega_{\cI_i}} \in \Ins( \Omega_{\cI_i}, \cH, \cK ) \big\}_{i = 1}^{n_\cI}$ to another set of instruments $\big\{ \cJ_j = \{ \Psi^j_y \}_{y \in \Omega_{\cJ_j}} \in \Ins( \Omega_{\cJ_j}, \tH, \tK ) \big\}_{j = 1}^{n_{\cJ}}$. The map is given by
\begin{equation}
\Psi^j_y = \sum_{s,i,x} q_{y|x,s} \Gamma^j_{is} \circ ( \Phi^i_x \otimes \mathrm{id} ) \circ \Lambda,
\end{equation}
where $\Lambda: \cS(\tH) \to \cS(\cH \otimes \cQ)$ is a quantum channel, $\mathrm{id}$ is the identity channel on $\cQ$, $s \in \{1, \ldots, n_S\} =: [n_S]$, $\big\{ \Gamma_j := \{ \Gamma^j_{is} \}_{i,s} \in \Ins( [n_S] \otimes [n_\cI], \cK \otimes \cQ, \tK) \big\}_{j = 1}^{n_\cJ}$ is an arbitrary simple PID (set of traditionally compatible instruments), and $q_{y|x,s}$ is a probability distribution for every $x$ and $s$.
\end{definition}
The following two statements were proven in Ref.~\cite[Theorem 1]{JC21}.
\begin{theorem}\label{thm:PID}
Free PID supermaps map every set of traditionally compatible instruments to a set of traditionally compatible instruments. Moreover, every set of traditionally compatible instruments can be mapped to any other set of traditionally compatible instruments using a free PID supermap.
\end{theorem}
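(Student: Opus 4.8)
The plan is to prove both statements by explicit constructions. For the first statement, suppose the input instruments $\{\cI_i\}_{i=1}^{n_\cI}$ are traditionally compatible, so that there is a joint instrument $\{M_{\vec x}\}$ with $\vec x=(x_1,\dots,x_{n_\cI})$ whose $i$-th marginal reproduces $\cI_i$, i.e.\ $\sum_{x_k:\,k\neq i}M_{\vec x}=\Phi^i_{x_i}$. Likewise, the post-processing family $\{\Gamma_j\}_{j=1}^{n_\cJ}$ appearing in the supermap is a simple PID and hence traditionally compatible, so there is a joint instrument $\{G_{\vec w}\}$ with $\vec w=((i_1,s_1),\dots,(i_{n_\cJ},s_{n_\cJ}))$ whose $j$-th marginal (the sum over all $(i_k,s_k)$ with $k\neq j$) equals $\Gamma^j_{is}$. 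I would then build a single joint instrument for the outputs $\{\cJ_j\}$ out of these two joint objects.

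Concretely, I would set
\[
\Xi_{\vec y}=\sum_{\vec x,\vec w}\Big(\prod_{j} q_{y_j\mid x_{i_j},\,s_j}\Big)\,G_{\vec w}\circ(M_{\vec x}\otimes\mathrm{id})\circ\Lambda ,
\]
where $x_{i_j}$ denotes the $i_j$-th entry of $\vec x$, mirroring the pairing of $\Gamma^j_{is}$ with $\Phi^i_x$ in the supermap. Summing $\Xi_{\vec y}$ over all $\vec y$ collapses every factor $q$ to $1$ and leaves the composition of the channels induced by $\{G_{\vec w}\}$, $\{M_{\vec x}\}$ and $\Lambda$, so $\Xi$ is a valid instrument. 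To obtain the $j_0$-th marginal I would sum over all $y_k$ with $k\neq j_0$ (each such sum gives $1$) and then sum the irrelevant $\vec w$- and $\vec x$-indices; because $G$ does not depend on $\vec x$ and $M$ does not depend on $\vec w$, these two marginalizations factorize and yield exactly $\Gamma^{j_0}_{is}$ and $\Phi^i_x$, which reassemble into $\Psi^{j_0}_{y_{j_0}}$. This shows the outputs are traditionally compatible.

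For the second statement, given an arbitrary traditionally compatible target $\{\cJ_j\}=\{\Theta^j_y\}$ with joint instrument $\{N_{\vec s}\}$, $\vec s=(s_1,\dots,s_{n_\cJ})$, I would design a supermap that discards the source's quantum output and prepares the target. Take $\cQ=\tH$, let $\Lambda(\sigma)=\xi\otimes\sigma$ carry a copy of the input alongside a fixed junk state $\xi\in\cS(\cH)$, let $s$ range over the (common) target outcome set with $c_i=1/n_\cI$, and set $\Gamma^j_{is}(\tau\otimes\sigma)=\tr[\tau]\,c_i\,\Theta^j_s(\sigma)$. These form a simple PID, certified by the joint instrument $G_{\vec w}(\tau\otimes\sigma)=\tr[\tau]\big(\prod_j c_{i_j}\big)N_{\vec s}(\sigma)$. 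Choosing $q_{y\mid x,s}=\delta_{y,s}$ and using that the source is a PID (so $\sum_x\tr[\Phi^i_x(\xi)]=1$ for every $i$) and that the target is a PID (so $\sum_s\Theta^j_s$ is the common target channel), a direct computation gives $\Psi^j_y=\Theta^j_y$, i.e.\ exactly the target.

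The \emph{main obstacle} is different in the two parts. In the first part, the crux is arranging the index correlations so that the single instrument $\Xi$ marginalizes correctly to all outputs at once; this relies on the fact that, under partial marginalization, the two-stage composition $G\circ(M\otimes\mathrm{id})$ factorizes into independent marginals of the post-processing and of the input. In the second part, the difficulty is that the classical post-processing $q_{y\mid x,s}$ is shared across all outputs $j$, so the entire $j$-dependence of the target must be absorbed into the simple PID $\{\Gamma_j\}$; I expect the construction of this traditionally compatible family---one that replays the target's joint instrument $\{N_{\vec s}\}$ while tracing out the source's output and routing the genuine quantum input through the ancilla $\cQ$---to be the key technical step.
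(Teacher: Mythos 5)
Your proof is correct, but note that the paper itself does not prove this theorem: it quotes it from Ref.~\cite[Theorem 1]{JC21}, and the only argument given in the text is a one-sentence sketch of the second statement. Your part-two construction is precisely that sketch made rigorous: the pre-processing $\Lambda(\sigma)=\xi\otimes\sigma$ routes the genuine input into the ancilla $\cQ$ while feeding junk into the source, the simple PID $\Gamma^j_{is}=c_i\,\Theta^j_s\circ\tr_{\cK}$ discards the source output and replays the target, its traditional compatibility being certified by the joint instrument $G_{\vec w}=\big(\prod_j c_{i_j}\big)\,N_{\vec s}\circ\tr_{\cK}$, and $q_{y|x,s}=\delta_{y,s}$ copies the outcome; the computation $\Psi^j_y=\Theta^j_y$ then goes through as you state. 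Your part-one argument does not appear in the paper at all (it is deferred to Ref.~\cite{JC21}), but it is a valid direct proof: each $\Xi_{\vec y}$ is completely positive, the total $\sum_{\vec y}\Xi_{\vec y}=\big(\sum_{\vec w}G_{\vec w}\big)\circ\big(\big(\sum_{\vec x}M_{\vec x}\big)\otimes\mathrm{id}\big)\circ\Lambda$ is a channel, and marginalizing over $y_k$, $k\neq j_0$, removes the factors $q_{y_k|x_{i_k},s_k}$, after which summing the remaining components of $\vec w$ and $\vec x$ reproduces $\Gamma^{j_0}_{is}$ and $\Phi^i_x$ via the two marginal conditions, yielding exactly $\Psi^{j_0}_{y_{j_0}}$; the key point that the coefficient attached to outcome $y_{j_0}$ depends only on $(i_{j_0},s_{j_0})$ and $x_{i_{j_0}}$ is what makes the factorization work. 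Two cosmetic remarks: the identity $\sum_x\tr[\Phi^i_x(\xi)]=1$ follows from each $\cI_i$ being an instrument (its total map is trace preserving), not from the PID property; and your choice $q_{y|x,s}=\delta_{y,s}$ implicitly assumes the target instruments share a common outcome set, which is in any case forced by the paper's definition, since $q_{y|x,s}$ carries no $j$ dependence.
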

The second statement is relatively easy to see by noticing that the definition of a free PID supermap includes applying a set of traditionally compatible instruments. Hence, one can just throw away the output of the original set and apply the new set.

The above theorem essentially states that free PID supermaps are free operations for traditional compatibility. Below, we show that this is not the case for parallel compatibility.
\begin{theorem}
Free PID supermaps are not free operations for parallel compatibility. In particular, free PID supermaps might map parallel compatible instruments to parallel incompatible instruments.
\end{theorem}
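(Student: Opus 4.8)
The plan is to exploit the structural gap between traditional and parallel compatibility together with the flexibility of free PID supermaps asserted in Theorem~\ref{thm:PID}. The key observation is that traditional compatibility forces both instruments in a pair to induce the \emph{same} channel (as recalled above), whereas parallel compatibility additionally requires, by Theorem~\ref{Th:ROI-lower-bou}, that the induced channels be compatible \emph{as channels}. Since a channel need not be compatible with itself (broadcasting is impossible), there exist pairs that are traditionally compatible yet parallel incompatible. I would therefore build the counterexample by starting from a pair that is simultaneously traditionally and parallel compatible, and mapping it---via the free PID supermap guaranteed by the second part of Theorem~\ref{thm:PID}---onto such a traditionally-compatible-but-parallel-incompatible pair.

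Concretely, for the domain I would take two copies of a single trash-and-prepare instrument, $\cI_1 = \cI_2 = \mathfrak{J^{\prime}}_{T}$. This pair is parallel compatible by Proposition~\ref{cor:compatible_trash-prep}, and it is traditionally compatible via the perfectly correlated joint instrument $\Phi_{xy}(\rho) = \delta_{xy}\,\tr[\rho\,T(x)]\,\eta_x$, whose marginals reproduce $\mathfrak{J^{\prime}}_{T}$. Since both copies induce the same (trash) channel, the pair is a simple PID, i.e.\ it lies in the domain of the supermap and in the scope of Theorem~\ref{thm:PID}.

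For the target I would take two copies of the identity channel on a Hilbert space with $\dim\cH \ge 2$, viewed as one-outcome instruments $\IdC_{\cH} \in \Ins(\{1\},\cH,\cH)$. Two identical one-outcome instruments are traditionally compatible through the joint instrument $\Phi_{11} = \IdC_{\cH}$ and induce the same channel, so this pair is again a simple PID. However, their induced channels are both $\IdC_{\cH}$, and a joint channel reproducing $\IdC_{\cH}$ on both marginals would be a universal broadcasting map, which does not exist for $\dim\cH \ge 2$ by the no-broadcasting theorem. Hence $R_C(\IdC_{\cH},\IdC_{\cH}) > 0$, and Theorem~\ref{Th:ROI-lower-bou} gives $R_I \ge R_C(\IdC_{\cH},\IdC_{\cH}) > 0$, so the pair is parallel incompatible.

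Finally, since both the domain pair $(\mathfrak{J^{\prime}}_{T}, \mathfrak{J^{\prime}}_{T})$ and the target pair $(\IdC_{\cH}, \IdC_{\cH})$ are sets of traditionally compatible instruments, the second statement of Theorem~\ref{thm:PID} provides a free PID supermap carrying the former to the latter. This single supermap sends a parallel compatible pair to a parallel incompatible pair, which is exactly the claim. The main obstacle---and the conceptual heart of the argument---is establishing the existence of a traditionally compatible but parallel incompatible pair; this rests entirely on the self-incompatibility of the identity channel, i.e.\ the no-broadcasting theorem, invoked through the channel lower bound of Theorem~\ref{Th:ROI-lower-bou}. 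The remaining verifications (traditional compatibility of the two chosen pairs and their being PIDs) are routine.
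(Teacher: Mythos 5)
Your proposal is correct and follows essentially the same route as the paper: both arguments pick a pair that is simultaneously traditionally and parallel compatible (trash-and-prepare type), a pair that is traditionally compatible but parallel incompatible (copies of the identity channel, whose self-incompatibility follows from no-cloning/no-broadcasting), and then invoke the second part of Theorem~\ref{thm:PID} to obtain a free PID supermap carrying the former to the latter. The only differences are cosmetic---the paper splits both the trash channel and the identity channel uniformly over $n$ outcomes rather than using your general $\mathfrak{J^{\prime}}_{T}$ and one-outcome identity instruments, and it cites the self-incompatibility of the identity channel directly instead of routing it through the robustness bound of Theorem~\ref{Th:ROI-lower-bou}.
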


\begin{proof}
Consider the following two instruments
\begin{gather}
\cI = \left\{ \Phi_x : \cS(\cH) \to \cL^+(\cH) ~|~ \Phi_x(\rho) = \frac{ \tr (\rho) }{n} \eta \right\}_{x = 1}^n \\
\cJ = \left\{ \Psi_x: \cS(\cH) \to \cL^+(\cH) ~|~ \Psi_x(\rho) = \frac1n \rho \right\}_{x=1}^n,
\end{gather}
where $\eta$ is some fixed state on $\cH$. It is straightforward to verify that two copies of $\cI$ are traditionally compatible via the joint instrument
\begin{equation}
\cI_t = \left\{ \Phi^t_{xy} : \cS(\cH) \to \cL^+(\cH) ~|~ \Phi^t_{xy}(\rho) = \frac{ \tr (\rho) }{n^2} \eta \right\}_{x,y = 1}^n
\end{equation}
Similarly, two copies of $\cJ$ are also traditionally compatible via the joint instrument
\begin{equation}
\cJ_t = \left\{ \Psi^t_{xy} : \cS(\cH) \to \cL^+(\cH) ~|~ \Psi^t_{xy} (\rho) = \frac{1}{n^2} \rho \right\}_{x,y=1}^n
\end{equation}
Therefore, following from Theorem~\ref{thm:PID}, there exists a free PID supermap that maps two copies of $\cI$ to two copies of $\cJ$.

One can also show that two copies of $\cI$ are parallel compatible via the joint instrument
\begin{equation}
\cI_p = \left\{ \Phi^p_{xy} : \cS(\cH) \to \cL^+(\cH \otimes \cH) ~|~ \Phi^p_{xy}(\rho) = \frac{ \tr (\rho) }{n^2} \eta \otimes \eta \right\}_{x,y = 1}^n
\end{equation}

On the other hand, two copies of $\cJ$ are not parallel compatible, which is a direct consequence of the no-cloning theorem, or alternatively, that the identity channel is not compatible with itself \cite{MF22}. Therefore, the free PID supermap mapping two copies of $\cI$ to two copies of $\cJ$ maps a pair of parallel compatible instruments to a pair of parallel incompatible instruments. As such, free PID supermaps are in general not free operations for parallel compatibility.
\end{proof}

\begin{remark}
In general, the above theorem is a consequence of Theorem \ref{thm:PID}, the fact that there exist instruments that are both traditionally compatible and parallel compatible, and the fact that there exist instruments that are traditionally compatible but not parallel compatible~\cite{MF22}.
\end{remark}

Concluding this section, we have identified a natural class of free operations for parallel compatibility, namely the class of post-processing of quantum instruments from Ref.~\cite{LS21}. At the same time, we showed that another class of operations---the free PID supermaps of Ref.~\cite{JC21}---that are natural free operations for traditional compatibility are not free operations for parallel compatibility. It could be an interesting research direction to try and identify more, physically motivated classes of free operations for parallel compatibility and develop a resource theory of instrument incompatibility based on these free operations.

\subsection{Cases for which the bounds are tight}

The following theorem establishes that the lower bound on the incompatibility robustness of instruments in terms of the incompatibility robustness of the induced POVMs and channels from Theorem \ref{Th:ROI-lower-bou} is tight for special measure-and-prepare instruments.

\begin{theorem}
 The lower bound in Theorem \ref{Th:ROI-lower-bou} is tight for any pair of special measure-and-prepare instruments $\mathfrak{I}_{A_1}$ and $\mathfrak{I}_{A_2}$, where $A_1 \in \M(\Omega_{A_1}, \cH)$ and $A_2\in\M( \Omega_{A_2}, \cH)$. \label{Th:ROI-tight-lower-bou}
\end{theorem}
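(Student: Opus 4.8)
The plan is to squeeze $R_I(\mathfrak{J}_{A_1},\mathfrak{J}_{A_2})$ between the lower bound of Theorem~\ref{Th:ROI-lower-bou} and a matching upper bound obtained by promoting an optimal measurement-robustness decomposition to the level of instruments. First I would record that $\mathfrak{J}_{A_i}$ induces precisely the POVM $A_i$ and the special measure-and-prepare channel $\Gamma_{A_i}$, so that Theorem~\ref{Th:ROI-lower-bou} immediately gives $R_I(\mathfrak{J}_{A_1},\mathfrak{J}_{A_2}) \ge \max\{R_M(A_1,A_2),\,R_{C}(\Gamma_{A_1},\Gamma_{A_2})\} \ge R_M(A_1,A_2)$.

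The core of the argument is the reverse inequality $R_I(\mathfrak{J}_{A_1},\mathfrak{J}_{A_2}) \le R_M(A_1,A_2)$. Setting $r = R_M(A_1,A_2)$, I would take an optimal feasible point of the measurement program in Eq.~\eqref{Def:ROI-measurements}: noise POVMs $\tilde{A}_1,\tilde{A}_2$ and a joint POVM $G$ with $\sum_y G(x,y) = (A_1(x)+r\tilde{A}_1(x))/(1+r)$ and $\sum_x G(x,y) = (A_2(y)+r\tilde{A}_2(y))/(1+r)$. From this I would build instrument data of the same special measure-and-prepare form: noise instruments with components $\tilde{\Phi}^1_x(\rho)=\tr[\rho\,\tilde{A}_1(x)]\ket{x}\bra{x}$ on $\cK_1$ and $\tilde{\Phi}^2_y(\rho)=\tr[\rho\,\tilde{A}_2(y)]\ket{y}\bra{y}$ on $\cK_2$, and a joint instrument $\Psi_{xy}(\rho)=\tr[\rho\,G(x,y)]\,\ket{x}\bra{x}\otimes\ket{y}\bra{y}$ on $\cK_1\otimes\cK_2$. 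The verification is then mechanical: partial-tracing over $\cK_2$ and summing over $y$ collapses $\Psi_{xy}$ onto its $\ket{x}\bra{x}$-diagonal and replaces $G(x,y)$ by $\sum_y G(x,y)$, so the marginal condition $\sum_y \tr_{\cK_2}\Psi_{xy} = (\Phi^1_x + r\tilde{\Phi}^1_x)/(1+r)$ is exactly the first measurement-compatibility equation for $A_1$, and the second marginal reproduces the equation for $A_2$ symmetrically. That $\{\Psi_{xy}\}$ is a genuine instrument follows from $\sum_{x,y}G(x,y)=\Id$, which holds because $A_1$ and $\tilde{A}_1$ are normalised and makes $\sum_{x,y}\Psi_{xy}$ trace preserving. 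Hence the noisy instruments are compatible at noise level $r$, giving $R_I(\mathfrak{J}_{A_1},\mathfrak{J}_{A_2}) \le R_M(A_1,A_2)$.

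Combining the two directions, the chain $R_M(A_1,A_2)\le \max\{R_M(A_1,A_2),R_{C}(\Gamma_{A_1},\Gamma_{A_2})\}\le R_I(\mathfrak{J}_{A_1},\mathfrak{J}_{A_2})\le R_M(A_1,A_2)$ collapses to equalities, so $R_I(\mathfrak{J}_{A_1},\mathfrak{J}_{A_2})=\max\{R_M(A_1,A_2),R_{C}(\Gamma_{A_1},\Gamma_{A_2})\}$ and the bound of Theorem~\ref{Th:ROI-lower-bou} is tight (as a byproduct one also obtains $R_{C}(\Gamma_{A_1},\Gamma_{A_2})\le R_M(A_1,A_2)$). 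The only genuine conceptual point---and the step I would be most careful to justify---is that the fixed \emph{orthonormal} post-measurement states make these instruments purely classical on the output, so that the optimal measurement noise lifts \emph{verbatim} to instrument noise and no separate optimisation over channel noise is needed; everything else is a direct check of the defining equations.
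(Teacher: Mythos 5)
Your proposal is correct and follows essentially the same route as the paper's proof: take an optimal feasible point $(\tilde{A}_1,\tilde{A}_2,G)$ of the measurement robustness program at $r=R_M(A_1,A_2)$, lift it to the joint instrument $\Psi_{xy}(\rho)=\tr[\rho\, G(x,y)]\,\ketbraq{x}\otimes\ketbraq{y}$ whose marginals are exactly the noisy special measure-and-prepare instruments, and combine the resulting upper bound with Theorem~\ref{Th:ROI-lower-bou} to collapse the chain of inequalities (obtaining $R_C(\Gamma_{A_1},\Gamma_{A_2})\le R_M(A_1,A_2)$ as the same byproduct). Your explicit check that $\sum_{x,y}G(x,y)=\Id$ makes $\{\Psi_{xy}\}$ trace preserving is a minor point the paper leaves implicit, but otherwise the two arguments coincide.
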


\begin{proof}
Suppose that $R_M( A_1,A_2 ) = r$. Then there exist $A^{\prime}_1\in \M(\Omega_{A_1}, \cH)$ and $A^{\prime}_2\in \M( \Omega_{A_2}, \cH)$ such that $\bar{A}_1=\{\bar{A}_1(x)=\frac{A_1(x)+ r A_1^{\prime}(x)}{1+r}\}_{x\in \Omega_{A_1}}$ and $\bar{A}_2=\{\bar{A}_2(y)=\frac{A_2(y)+ r A_2^{\prime}(y)}{1+r}\}_{y\in \Omega_{A_2}}$ are compatible. That is, there exists a joint measurement $G\in \M( \Omega_{A_1} \times \Omega_{A_2}, \cH)$ such that
\begin{align}
\bar{A}_1(x)=\sum_{y}G(x,y) \quad \forall x\in\Omega_{A_1} \\
\bar{A}_2(y)=\sum_{x}G(x,y) \quad \forall y\in\Omega_{A_2}.
\end{align}

Now consider the special measure-and-prepare quantum instrument $\cI=\{\Phi_{xy}:\cS(\cH)\rightarrow \cL^+(\cK_1\otimes\cK_2)\}_{x\in\Omega_{A_1},y\in\Omega_{A_2}}$  given by
\begin{align}
\Phi_{xy}(\rho)=\tr[\rho G(x,y)]\ketbraq{x} \otimes \ketbraq{y}.
\end{align}
Then, for all $x\in\Omega_{A_1}$ we have that
\begin{align}
\sum_y\tr_{\cK_2}\Phi_{xy}(\rho)=&\sum_y\tr[\rho G(x,y)]\ketbraq{x}\nonumber\\
=&\tr[\rho \bar{A}_1(x)]\ketbraq{x}\nonumber\\
=&\tr \left[\rho \frac{A_1(x)+ r A_1^{\prime}(x)}{1+r} \right]\ketbraq{x}\nonumber\\
=&\frac{\tr[\rho A_1(x)]\ketbraq{x}+ r \tr[\rho A_1^{\prime}(x)]\ketbraq{x}}{1+r},
\end{align}
and for all $y\in\Omega_{A_2}$ we have that
\begin{align}
\sum_x\tr_{\cK_1}\Phi_{xy}(\rho)=&\sum_x\tr[\rho G(x,y)]\ketbraq{y}\nonumber\\
=&\tr[\rho \bar{A}_2(y)]\ketbraq{y}\nonumber\\
=&\tr \left[\rho \frac{A_2(y)+ r A_2^{\prime}(y)}{1+r} \right]\ketbraq{y}\nonumber\\
=&\frac{\tr[\rho A_2(y)]\ketbraq{y}+ r \tr[\rho A_2^{\prime}(y)]\ketbraq{y}}{1+r}.
\end{align}
Hence, from Definition \ref{Def:spec-meas-prep} and Definition \ref{Def.parallel_comp}, we get that $r=R_M( A_1,A_2 ) \geq R_I( \mathfrak{I}_{A_1}, \mathfrak{I}_{A_2} )$. Since from Theorem~\ref{Th:ROI-lower-bou}, we have $R_M( A_1,A_2 ) \leq R_I( \mathfrak{I}_{A_1}, \mathfrak{I}_{A_2} )$, we establish that $R_M( A_1,A_2 )= R_I( \mathfrak{I}_{A_1}, \mathfrak{I}_{A_2} )$. Also, from Theorem \ref{Th:ROI-lower-bou}, we know that $R_I( \mathfrak{I}_{A_1}, \mathfrak{I}_{A_2} )\geq R_C( \Gamma_{A_1}, \Gamma_{A_2} )$. Therefore, we conclude that $R_M(A_1,A_2 )\geq R_C( \Gamma_{A_1}, \Gamma_{A_2} )$, and in summary, the lower bound in Theorem \ref{Th:ROI-lower-bou} is tight.
\end{proof}

The following lemma from Ref.~\cite{LS21} allows us to generalize the previous result to arbitrary measure-and-prepare instruments.

\begin{lemma}
For any measurement $A\in \M( \Omega_A, \cH)$, a generic measure-and-prepare instrument $\mathfrak{J}^{\prime}_{A}$ is a post-processing of the special measure-and-prepare instrument $\mathfrak{J}_{A}$, that is, $\mathfrak{J}^{\prime}_{A}\lesssim\mathfrak{J}_{A}$ \cite[Proposition 10]{LS21}.\label{Le:meas-prep-spec-meas-prep}
\end{lemma}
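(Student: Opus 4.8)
The plan is to show that any generic measure-and-prepare instrument $\mathfrak{J}^{\prime}_{A}$ can be obtained from the special measure-and-prepare instrument $\mathfrak{J}_{A}$ by applying a suitable collection of post-processing instruments, thereby exhibiting the post-processing relation $\mathfrak{J}^{\prime}_{A}\lesssim\mathfrak{J}_{A}$ directly through Eq.~\eqref{eq:def-post-proc-in}. Recall that $\mathfrak{J}_{A}=\{J^A_x\}$ with $J^A_x(\rho)=\tr[\rho A(x)]\ketbraq{x}$ outputs a classical flag $\ket{x}$ in an orthonormal basis, while $\mathfrak{J}^{\prime}_{A}=\{J^{\prime A}_x\}$ with $J^{\prime A}_x(\rho)=\tr[\rho A(x)]\rho^{\prime}_x$ outputs the fixed state $\rho^{\prime}_x$. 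The key observation is that the special instrument has already extracted all the relevant classical information into the orthogonal flags, so the remaining task is purely a state-preparation step conditioned on reading off that flag.

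\emph{First} I would construct, for each outcome $x\in\Omega_A$, a post-processing instrument $\cR^x=\{R^x_{x'}\}_{x'\in\Omega_A}\in\Ins(\Omega_A,\cK,\cK')$ that reads the incoming flag and, when it matches, prepares the target state. A natural choice is
\begin{equation}
R^x_{x'}(\sigma)=\delta_{x,x'}\,\tr[\sigma\,\ketbraq{x}]\,\rho^{\prime}_{x}+(\text{completion terms}),
\end{equation}
where the completion terms on the orthogonal complement of $\ket{x}$ are chosen only to make each $\cR^x$ a legitimate instrument (i.e.\ $\sum_{x'}R^x_{x'}$ trace preserving). \emph{Then} I would verify the defining identity
\begin{equation}
\sum_{x\in\Omega_A}R^x_{x'}\circ J^A_x=J^{\prime A}_{x'}\quad\forall x',
\end{equation}
which holds because $J^A_x(\rho)=\tr[\rho A(x)]\ketbraq{x}$ feeds the flag $\ketbraq{x}$ into $\cR^x$, and the Kronecker delta in $R^x_{x'}$ collapses the sum to the single term $\tr[\rho A(x')]\rho^{\prime}_{x'}$, matching $J^{\prime A}_{x'}(\rho)$ exactly. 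The orthogonality of the flag basis $\{\ket{x}\}$ is what guarantees that no cross terms survive and that each post-processor acts cleanly on its own flag.

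\emph{The main obstacle} will be ensuring that each $\cR^x$ is genuinely a valid instrument (a set of quantum operations summing to a channel) rather than merely a collection of CP maps that reproduce the correct action on the flag states. On the flag state $\ketbraq{x}$ the map $R^x_{x'}$ behaves as desired, but to satisfy trace preservation of $\sum_{x'}R^x_{x'}$ on \emph{all} of $\cS(\cK)$—including states outside the flag basis—one must specify the behaviour on the orthogonal complement, and this must be done consistently (for instance by discarding the off-flag component and preparing a fixed default state). Since only the action on the flag states $\ketbraq{x}$ enters the post-processing identity above, these completion choices are free and do not affect the result; the work is simply to check that a valid completion exists. Because this lemma is quoted from Ref.~\cite{LS21} (Proposition~10), I would defer to that reference for the detailed construction and trace-preservation bookkeeping rather than reproduce it in full.
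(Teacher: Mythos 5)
The paper itself does not prove this lemma; it simply imports it as Proposition~10 of Ref.~\cite{LS21}, so there is no internal proof to compare against. Your construction is correct and self-contained, and it is the natural argument: condition on the classical outcome $x$ of $\mathfrak{J}_A$ and prepare $\rho'_x$. One remark, though: the ``main obstacle'' you identify (completing each $\cR^x$ to a trace-preserving family off the flag subspace) dissolves entirely if you notice that reading the flag is redundant. The post-processing instruments $\cR^x$ are already indexed by the classical outcome $x$ of the first instrument, so the flag $\ketbraq{x}$ carries no information beyond the index $x$ you are conditioning on. You can therefore discard the incoming quantum state altogether and take the trash-and-prepare maps
\begin{equation}
R^x_{x'}(\sigma)=\delta_{x,x'}\,\tr[\sigma]\,\rho'_{x'},
\end{equation}
which manifestly form valid instruments ($\sum_{x'}R^x_{x'}(\sigma)=\tr[\sigma]\rho'_x$ is a channel), with no completion terms to choose. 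The defining identity then follows in one line, since $\tr[J^A_x(\rho)]=\tr[\rho A(x)]$:
\begin{equation}
\sum_{x}R^x_{x'}\circ J^A_x(\rho)=\tr\!\left[J^A_{x'}(\rho)\right]\rho'_{x'}=\tr[\rho A(x')]\,\rho'_{x'}=J'^{A}_{x'}(\rho).
\end{equation}
Note that this simplified argument never uses the orthogonality of the flags $\{\ket{x}\}$, only the induced POVM of $\mathfrak{J}_A$; it thus shows the slightly stronger fact that \emph{any} $A$-compatible instrument post-processes to any measure-and-prepare instrument for $A$. Your deferral to Ref.~\cite{LS21} for bookkeeping is acceptable but, as shown, unnecessary.
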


\begin{corollary}
 The lower bound in Theorem \ref{Th:ROI-lower-bou} is tight for any pair of generic measure-and-prepare instruments $\mathfrak{I}^{\prime}_{A_1}$ and $\mathfrak{I}^{\prime}_{A_2}$ for any $A_1\in \M( \Omega_{A_1}, \cH)$ and $A_2\in \M( \Omega_{A_2}, \cH)$. \label{Coro:ROI-tight-lower-bou}
\end{corollary}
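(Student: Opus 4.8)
The plan is to combine the post-processing monotonicity of the incompatibility robustness (Theorem~\ref{Th:post-proc-in-comp-ROI}) with the tightness result already established for special measure-and-prepare instruments (Theorem~\ref{Th:ROI-tight-lower-bou}). The key observation is that Lemma~\ref{Le:meas-prep-spec-meas-prep} places each generic measure-and-prepare instrument below the corresponding special one in the post-processing preorder, so the robustness of the generic pair cannot exceed that of the special pair, while Theorem~\ref{Th:ROI-lower-bou} forces it to be at least as large as the robustness of the induced POVMs. The whole argument is then a two-sided squeeze.

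First I would record that the POVM induced by $\mathfrak{I}^{\prime}_{A_i}$ is exactly $A_i$: since $J^{\prime A_i}_x(\rho) = \tr[\rho A_i(x)]\rho^{\prime}_x$ and $\rho^{\prime}_x$ is normalized, one has $\tr[J^{\prime A_i}_x(\rho)] = \tr[\rho A_i(x)]$, so the induced POVM coincides with $A_i$. Consequently, the lower bound of Theorem~\ref{Th:ROI-lower-bou} applied to the pair $(\mathfrak{I}^{\prime}_{A_1},\mathfrak{I}^{\prime}_{A_2})$ reads $R_I(\mathfrak{I}^{\prime}_{A_1},\mathfrak{I}^{\prime}_{A_2}) \ge \max\{R_M(A_1,A_2), R_C(\Gamma^{\prime}_{A_1},\Gamma^{\prime}_{A_2})\} \ge R_M(A_1,A_2)$.

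Next I would invoke Lemma~\ref{Le:meas-prep-spec-meas-prep}, which gives $\mathfrak{I}^{\prime}_{A_i} \lesssim \mathfrak{I}_{A_i}$ for $i=1,2$. Feeding both relations into the monotonicity statement of Theorem~\ref{Th:post-proc-in-comp-ROI} yields the matching upper bound $R_I(\mathfrak{I}^{\prime}_{A_1},\mathfrak{I}^{\prime}_{A_2}) \le R_I(\mathfrak{I}_{A_1},\mathfrak{I}_{A_2})$, and by Theorem~\ref{Th:ROI-tight-lower-bou} the right-hand side equals $R_M(A_1,A_2)$. Combining the two bounds gives $R_I(\mathfrak{I}^{\prime}_{A_1},\mathfrak{I}^{\prime}_{A_2}) = R_M(A_1,A_2)$, and since $R_M(A_1,A_2)$ is also exactly the lower bound of Theorem~\ref{Th:ROI-lower-bou}, that bound is attained. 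The squeeze additionally forces $R_C(\Gamma^{\prime}_{A_1},\Gamma^{\prime}_{A_2}) \le R_M(A_1,A_2)$, so the maximum in the bound collapses onto the POVM term, just as in the special case.

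There is essentially no hard analytic step here. The only point requiring a little care is the bookkeeping: checking that the induced POVMs are genuinely $A_1$ and $A_2$ (so that the $R_M$ term in the lower bound of Theorem~\ref{Th:ROI-lower-bou} matches the $R_M$ term produced by Theorem~\ref{Th:ROI-tight-lower-bou}), and confirming that Lemma~\ref{Le:meas-prep-spec-meas-prep} applies to both arguments simultaneously so that Theorem~\ref{Th:post-proc-in-comp-ROI} can be used with the two post-processing relations at once. Beyond this matching of notation I do not anticipate any genuine obstacle.
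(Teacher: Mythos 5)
Your proposal is correct and follows exactly the paper's own argument: the upper bound via Lemma~\ref{Le:meas-prep-spec-meas-prep}, Theorem~\ref{Th:post-proc-in-comp-ROI}, and Theorem~\ref{Th:ROI-tight-lower-bou}, squeezed against the lower bound from Theorem~\ref{Th:ROI-lower-bou}. The only difference is that you spell out the bookkeeping (verifying the induced POVMs are $A_1$ and $A_2$), which the paper leaves implicit.
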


\begin{proof}
From Lemma~\ref{Le:meas-prep-spec-meas-prep}, Theorem \ref{Th:post-proc-in-comp-ROI} and Theorem~\ref{Th:ROI-tight-lower-bou} we have $R_I( \mathfrak{I}^{\prime}_{A_1},\mathfrak{I}^{\prime}_{A_2} )\leq R_I( \mathfrak{I}_{A_1},\mathfrak{I}_{A_2} )=R_M( A_1,A_2 )$, and from Theorem~\ref{Th:ROI-lower-bou}, we have $R_I( \mathfrak{I}^{\prime}_{A_1},\mathfrak{I}^{\prime}_{A_2} )\geq R_M( A_1,A_2 )$. Therefore, we conclude that $R_I( \mathfrak{I}^{\prime}_{A_1},\mathfrak{I}^{\prime}_{A_2} )=R_M( A_1,A_2 )$.
\end{proof}

\subsection{Indecomposable instruments}

Indecomposable instruments have been introduced in Ref.~\cite{LS21}. They play a fundamental role in the post-processing preorder, since (while these instruments are not in general maximal in the preorder) every instrument can be obtained from an indecomposable instrument via post-processing. This property is analogous to rank-1 POVMs, in the sense that every POVM can be obtained from a rank-1 POVM via classical post-processing. For the sake of clarity, we recall the definition and a basic characterisation of indecomposable instruments from Ref.~\cite{LS21}.
\begin{definition}
A quantum operation $\Phi$ is called indecomposable if $\Phi = \Psi + \Psi'$ for some quantum operations $\Psi$ and $\Psi'$ implies $\Psi = \mu \Phi$ and $\Psi' = \mu' \Phi$ for some $\mu, \mu' >0$. A quantum instrument is called indecomposable if all of its nonzero operations are indecomposable.
\end{definition}
\begin{proposition}
A quantum operation is indecomposable if and only if it has Kraus rank one.
\end{proposition}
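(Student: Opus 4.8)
The plan is to establish both directions of the equivalence by directly analyzing what it means for a quantum operation to decompose, using the Kraus (operator-sum) representation as the central tool. Recall that any quantum operation $\Phi: \cS(\cH) \to \cL^+(\cK)$ extends to a CP map with a Kraus decomposition $\Phi(\rho) = \sum_{i=1}^{k} K_i \rho K_i^\dagger$, where $k$ is the Kraus rank (the minimal number of Kraus operators, equivalently the rank of the Choi operator $C_\Phi = (\Phi \otimes \IdC)(\ketbraq{\omega})$ for a maximally entangled $\ket{\omega}$).

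First I would prove the easy direction: if $\Phi$ has Kraus rank one, then $\Phi$ is indecomposable. Suppose $\Phi = \Psi + \Psi'$ for quantum operations $\Psi, \Psi'$. Passing to Choi operators, this reads $C_\Phi = C_\Psi + C_{\Psi'}$ with $C_\Psi, C_{\Psi'} \ge 0$. Since $C_\Phi$ has rank one, say $C_\Phi = \ketbraq{v}$, any positive operator $C_\Psi \le C_\Phi$ must be supported on $\supp(C_\Phi) = \text{span}\{\ket{v}\}$, forcing $C_\Psi = \mu\, C_\Phi$ and likewise $C_{\Psi'} = \mu' C_\Phi$ for some $\mu, \mu' \ge 0$ with $\mu + \mu' = 1$. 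Translating back, $\Psi = \mu \Phi$ and $\Psi' = \mu' \Phi$, which is exactly indecomposability (the positivity $\mu, \mu' > 0$ holds whenever $\Psi, \Psi'$ are nonzero). The key fact used here is that a positive operator dominated by a rank-one positive operator is a nonnegative multiple of it.

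For the converse, I would argue the contrapositive: if $\Phi$ has Kraus rank $k \ge 2$, then $\Phi$ is decomposable. With Choi operator $C_\Phi$ of rank $k \ge 2$, pick any spectral or rank decomposition $C_\Phi = \sum_{i=1}^k \lambda_i \ketbraq{w_i}$ with $\lambda_i > 0$ and the $\ket{w_i}$ linearly independent. Setting $C_\Psi = \lambda_1 \ketbraq{w_1}$ and $C_{\Psi'} = C_\Phi - C_\Psi = \sum_{i \ge 2} \lambda_i \ketbraq{w_i}$, both are nonzero positive operators summing to $C_\Phi$, so they correspond to nonzero quantum operations $\Psi, \Psi'$ with $\Phi = \Psi + \Psi'$. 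Since $\ket{w_1}$ and the span of $\{\ket{w_i}\}_{i \ge 2}$ are distinct subspaces (linear independence), $C_\Psi$ is not proportional to $C_\Phi$, hence $\Psi \ne \mu \Phi$ for any $\mu$, witnessing decomposability.

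The main subtlety to handle carefully is the trace-nonincreasing condition: $\Psi$ and $\Psi'$ must themselves be legitimate quantum operations, i.e., trace-nonincreasing CP maps, not merely CP maps. This is automatic in both directions, since $\Psi \le \Phi$ in the Choi ordering implies $\tr[\Psi(\rho)] \le \tr[\Phi(\rho)] \le \tr(\rho)$ for all states, but I would state it explicitly. The only other point deserving care is the correspondence between ``Kraus rank one'' and ``Choi rank one'' and the passage between operations and their Choi operators preserving the convex/additive structure; both are standard via the Choi--Jamio\l{}kowski isomorphism, and I expect no real obstacle there. Thus the crux of the proof is the elementary linear-algebraic observation about positive operators dominated by a rank-one operator.
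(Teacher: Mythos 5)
Your proof is correct. Note, however, that the paper itself does not prove this proposition: it is recalled verbatim from Ref.~\cite{LS21} as background material, so there is no in-paper argument to compare against. Your route through the Choi--Jamio\l{}kowski isomorphism is the standard one and is essentially how such statements are established in the literature: decompositions $\Phi = \Psi + \Psi'$ into CP maps correspond exactly to decompositions $C_\Phi = C_\Psi + C_{\Psi'}$ into positive semidefinite operators, the rank of $C_\Phi$ equals the Kraus rank, and the crux is the elementary fact that $0 \le C_\Psi \le C_\Phi$ with $C_\Phi$ rank one forces $C_\Psi = \mu C_\Phi$, $\mu \ge 0$. Both directions are handled soundly, including the two points that are easy to overlook: (i) the summands in the converse direction are automatically trace non-increasing because $\tr[\Psi(\rho)] \le \tr[\Phi(\rho)] \le \tr(\rho)$, and (ii) a rank-one $C_\Psi$ cannot be proportional to a $C_\Phi$ of rank at least two, so the constructed decomposition genuinely witnesses decomposability. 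One cosmetic remark: the paper's definition quantifies over arbitrary quantum operations $\Psi, \Psi'$, which formally includes the zero map; as you note, the conclusion $\mu, \mu' > 0$ can only be demanded for nonzero summands, so the definition implicitly restricts to nonzero operations---your parenthetical handles this, and it is the same implicit convention used in Ref.~\cite{LS21}.
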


To show that every instrument is a post-processing of an indecomposable instrument, one needs to introduce the concept of a \textit{detailed instrument} \cite{LS21}.
\begin{definition}
Consider a quantum instrument $\cI = \{ \Phi_x \}$ with Kraus decomposition $\Phi_x(\rho) = \sum_{i=1}^{n_x} K_{ix} \rho K^\dagger_{ix}$. The detailed instrument of $\cI$ is defined as $\hat{\cI} = \{ \hat{\Phi}_{ix}(\rho) = K_{ix} \rho K^\dagger_{ix}\}$.
\end{definition}
It is straightforward to see that for every instrument $\cI$ we have that $\cI \lesssim \hat{\cI}$ \cite{LS21}, and clearly every detailed instrument is indecomposable. In Ref.~\cite{LS21}, the authors provide a class of instruments for which the converse is true as well, and hence the instrument is post-processing equivalent to its detailed instrument.
\begin{proposition}\label{prop:detailed_equiv}
An instrument $\cI = \{ \Phi_x \} \in \Ins(\Omega,\cH, \cK)$ with Kraus decomposition $ \Phi_x(\rho) = \sum_{i=1}^{n_x} K_{ix} \rho K^\dagger_{ix}$ is post-processing equivalent to its detailed instrument if $K^\dagger_{ix} K_{jx} = 0$ for all $i \neq j$ and $x \in \Omega$.
\end{proposition}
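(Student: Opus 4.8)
The statement asks us to show that an instrument $\cI = \{ \Phi_x \}$ with Kraus decomposition $\Phi_x(\rho) = \sum_{i=1}^{n_x} K_{ix} \rho K^\dagger_{ix}$ satisfying $K^\dagger_{ix} K_{jx} = 0$ for $i \neq j$ is post-processing equivalent to its detailed instrument $\hat{\cI} = \{ \hat{\Phi}_{ix} \}$ where $\hat{\Phi}_{ix}(\rho) = K_{ix} \rho K^\dagger_{ix}$. Since $\cI \lesssim \hat{\cI}$ always holds (as noted in the excerpt), the real content is to prove the reverse direction $\hat{\cI} \lesssim \cI$ under the orthogonality hypothesis.

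**The plan.**
I would exhibit an explicit post-processing from $\cI$ to $\hat{\cI}$. Recalling the definition of post-processing, I need a set of instruments $\{ \cR^x = \{ R^x_{(jy)} \} \}_{x}$ such that $\hat{\Phi}_{jy} = \sum_x R^x_{(jy)} \circ \Phi_x$ for every pair $(j,y)$ indexing the detailed instrument. The natural choice is to let the post-processing instrument $\cR^x$ act inside the block produced by $\Phi_x$: it should refine the single outcome $x$ into the outcomes $(i,x)$ for $i = 1,\dots,n_x$, and produce the zero operation for outcomes $(j,y)$ with $y \neq x$. So I set $R^x_{(iy)} = \delta_{xy} \, \cS_{ix}$ where $\cS_{ix}$ is a quantum operation to be chosen so that $\cS_{ix} \circ \Phi_x = \hat{\Phi}_{ix}$.

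**The key construction.**
The orthogonality hypothesis is exactly what makes such an $\cS_{ix}$ exist as a completely positive map that can be completed to an instrument. Concretely, I would try $\cS_{ix}(\sigma) = V_{ix} \, \sigma \, V^\dagger_{ix}$ for some operators $V_{ix}$ acting on $\cK$, and check that $\cS_{ix}(\Phi_x(\rho)) = \hat{\Phi}_{ix}(\rho)$ for all $\rho$. The condition $K^\dagger_{ix} K_{jx} = 0$ for $i \neq j$ means the ranges of the $K_{jx}$ are mutually orthogonal subspaces of $\cK$; hence there is a well-defined orthogonal projection $P_{ix}$ onto the range of $K_{ix}$, and $P_{ix} K_{jx} = \delta_{ij} K_{ix}$. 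Taking $\cS_{ix}(\sigma) = P_{ix}\, \sigma\, P_{ix}$ then gives $\cS_{ix}(\Phi_x(\rho)) = \sum_j P_{ix} K_{jx} \rho K^\dagger_{jx} P_{ix} = K_{ix} \rho K^\dagger_{ix} = \hat{\Phi}_{ix}(\rho)$, as desired. Because the $P_{ix}$ are mutually orthogonal projections summing to (at most) the identity on $\cK$, the collection $\{ \cS_{ix} \}_i$ is trace non-increasing, so I can complete it to a genuine instrument $\cR^x$ by appending one extra operation carrying the leftover trace (e.g.\ a trash-and-prepare component), ensuring $\sum_i \cS_{ix} + (\text{leftover})$ is trace preserving.

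**The main obstacle.**
The calculational steps above are routine; the step that needs genuine care is verifying that $\cR^x$ is a legitimate instrument in $\Ins(\Omega_{\hat\cI}, \cK, \cK)$, i.e.\ that $\sum_{(i,y)} R^x_{(iy)}$ is trace preserving rather than merely trace non-increasing. The projections $P_{ix}$ need not sum to $\Id_\cK$ (their ranges may not span all of $\cK$), so I must explicitly add a compensating operation that absorbs the complementary projection $\Id_\cK - \sum_i P_{ix}$, and I should confirm this addition does not disturb the outcome-by-outcome identity $\sum_x R^x_{(iy)} \circ \Phi_x = \hat{\Phi}_{iy}$, which it does not since the compensating piece is supported on the orthogonal complement of $\operatorname{ran}(\Phi_x)$ and thus annihilates $\Phi_x(\rho)$. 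Once the instrument structure is confirmed, $\hat{\cI} \lesssim \cI$ follows, and combined with the standing fact $\cI \lesssim \hat{\cI}$ we conclude $\cI \sim \hat{\cI}$.
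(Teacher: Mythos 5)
Your proof is correct. Note that the paper itself does not prove Proposition~\ref{prop:detailed_equiv}: the statement is recalled from Ref.~\cite{LS21} without proof, so there is no in-paper argument to compare yours against. Your construction is the natural one (and essentially the one in that reference): the hypothesis $K_{ix}^\dagger K_{jx} = 0$ makes the ranges of the $K_{jx}$ mutually orthogonal, so conjugation by the range projections $P_{ix}$ satisfies $P_{ix} K_{jx} = \delta_{ij} K_{ix}$ and hence extracts $\hat{\Phi}_{ix}$ from $\Phi_x$, giving $\hat{\cI} \lesssim \cI$; the converse $\cI \lesssim \hat{\cI}$ holds for every instrument, as the paper notes.

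The only step requiring genuine care is the one you flagged. The definition of post-processing fixes the outcome set of each post-processing instrument $\cR^x$ to be $\Omega_{\hat{\cI}}$, so the trace-compensating operation absorbing $\Id_{\cK} - \sum_i P_{ix}$ cannot be appended as a genuinely new outcome; it must be merged into one of the existing operations $R^x_{(i,x)}$, for instance as $\sigma \mapsto \tr\bigl[(\Id_{\cK} - \sum_i P_{ix})\sigma\bigr]\,\eta$ for some fixed state $\eta \in \cS(\cK)$, which is completely positive and restores trace preservation of $\sum_{(i,y)} R^x_{(i,y)}$. Your observation that this extra piece annihilates $\Phi_x(\rho)$ --- because $(\Id_{\cK} - \sum_i P_{ix}) K_{jx} = 0$ for every $j$, so $\tr[(\Id_{\cK} - \sum_i P_{ix})\Phi_x(\rho)] = 0$ --- is exactly what makes this merging harmless: the outcome-by-outcome identity $\hat{\Phi}_{iy} = \sum_x R^x_{(i,y)} \circ \Phi_x$ survives, and together with $\cI \lesssim \hat{\cI}$ this yields $\cI \sim \hat{\cI}$.
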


Since indecomposable instruments are fundamental in the post-processing preorder, it is useful to characterise compatible sets of indecomposable instruments. Here we provide a family of compatible pairs of indecomposable instruments constructed from compatible measurements.
\begin{theorem}\label{thm:indecomp_comp}
For every pair of compatible measurements $A \in \M( \Omega_A, \cH)$ and $B \in \M( \Omega_B, \cH)$ there exists a pair of compatible indecomposable instruments.
\end{theorem}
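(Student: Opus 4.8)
The plan is to build the required pair not from $A$ and $B$ directly, but from a common \emph{rank-one refinement} of their joint measurement. The guiding observation is that indecomposability forces every nonzero operation to have the form $\Phi(\rho)=K\rho K^\dagger$ with $K$ of rank one, and such an operation induces the effect $K^\dagger K$, which is then necessarily of rank one as well. Hence one cannot hope to realise generic (higher-rank) POVMs $A$ and $B$ by indecomposable instruments: for instance the L\"uders instruments of a trivial POVM $T=\{\tfrac1n\Id\}$ collapse to copies of a multiple of the identity channel, and these fail to be parallel compatible by the no-cloning obstruction that the identity channel is not compatible with itself \cite{MF22}. The right move is therefore to refine the \emph{joint} POVM to rank one, since it is a \emph{common} refinement that will guarantee compatibility.

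First, since $A$ and $B$ are compatible there is a joint measurement $G\in\M(\Omega_A\times\Omega_B,\cH)$ with $\sum_y G(x,y)=A(x)$ and $\sum_x G(x,y)=B(y)$. I would spectrally decompose each effect as $G(x,y)=\sum_k\ket{m_{xyk}}\bra{m_{xyk}}$, where the unnormalised vectors $\ket{m_{xyk}}$ absorb the (nonzero) eigenvalues. The family $\hat G=\{\ket{m_{xyk}}\bra{m_{xyk}}\}_{(x,y,k)}$ is then a rank-one POVM that coarse-grains to $A$ over $(y,k)$ and to $B$ over $(x,k)$, which is exactly the sense in which the construction is built \emph{from} the compatible measurements.

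Next I would define the two candidate instruments, both with outcome set $\{(x,y,k)\}$ and with output spaces $\cK_1,\cK_2$ carrying orthonormal bases $\{\ket{x}\}_{x\in\Omega_A}$ and $\{\ket{y}\}_{y\in\Omega_B}$, by
\begin{equation}
\Phi^1_{xyk}(\rho)=\bra{m_{xyk}}\rho\ket{m_{xyk}}\,\ketbraq{x},\qquad
\Phi^2_{xyk}(\rho)=\bra{m_{xyk}}\rho\ket{m_{xyk}}\,\ketbraq{y}.
\end{equation}
Each nonzero operation has the single Kraus operator $\ket{x}\bra{m_{xyk}}$ (resp.\ $\ket{y}\bra{m_{xyk}}$) and thus Kraus rank one, so both $\cI_1$ and $\cI_2$ are indecomposable; a one-line check using $\sum_{xyk}\ket{m_{xyk}}\bra{m_{xyk}}=\Id$ shows each is trace preserving and therefore a genuine instrument (inducing $\hat G$).

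Finally I would exhibit an explicit joint instrument witnessing parallel compatibility, taken diagonal in the joint outcome,
\begin{equation}
\Psi_{(xyk),(x'y'k')}(\rho)=\delta_{(xyk),(x'y'k')}\,\bra{m_{xyk}}\rho\ket{m_{xyk}}\,\ketbraq{x}\otimes\ketbraq{y},
\end{equation}
and verify directly that $\sum_{x'y'k'}\tr_{\cK_2}\Psi_{(xyk),(x'y'k')}=\Phi^1_{xyk}$ and $\sum_{xyk}\tr_{\cK_1}\Psi_{(xyk),(x'y'k')}=\Phi^2_{x'y'k'}$, while $\sum\Psi$ is trace preserving. By Definition~\ref{Def.parallel_comp} this establishes that $\cI_1$ and $\cI_2$ are parallel compatible indecomposable instruments built from $A$ and $B$. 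The only genuine difficulty---already resolved by the design above---is recognising that indecomposability rules out realising $A,B$ themselves and mandates passing to a rank-one refinement of the \emph{joint} POVM; once that common refinement is fixed, both indecomposability and compatibility are routine to verify.
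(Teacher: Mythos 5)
Your proof is correct, but it takes a genuinely different route from the paper's. The paper does not build the indecomposable pair from scratch: it takes the compatible pair $\cI_A$, $\cI_B$ of Ref.~\cite{MF22} (constructed via a Naimark dilation of the joint POVM $G$ and a rank-one fine-graining of the dilating projective measurement), observes that their Kraus operators satisfy the orthogonality condition $K^\dagger_{ix}K_{jx}=0$ of Proposition~\ref{prop:detailed_equiv}, so each is post-processing equivalent to its detailed instrument, and then invokes Corollary~\ref{cor:compatible_equiv} (compatibility is preserved under post-processing equivalence) to conclude that the detailed instruments---which are indecomposable by construction---are compatible. You instead bypass both the dilation and the post-processing machinery: you refine each effect $G(x,y)$ spectrally into rank-one pieces $\ket{m_{xyk}}\bra{m_{xyk}}$, define two measure-and-prepare-type instruments on the refined outcome set with single rank-one Kraus operators $\ket{x}\bra{m_{xyk}}$ and $\ket{y}\bra{m_{xyk}}$, and exhibit an explicit diagonal joint instrument whose marginals check out directly, with trace preservation following from $\sum_{x,y,k}\ket{m_{xyk}}\bra{m_{xyk}}=\Id_{\cH}$. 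Your argument is more elementary and self-contained---it needs neither the compatibility result of Ref.~\cite{MF22} nor Theorem~\ref{Th:post-proc-in-comp-pres} and Corollary~\ref{cor:compatible_equiv}---whereas the paper's route showcases its post-processing results as working tools and ties the indecomposable pair, via post-processing equivalence, to instruments that induce exactly $A$ and $B$. Your instruments, like the paper's detailed ones, induce only a common rank-one refinement of $G$; as you correctly note, this is unavoidable, since an indecomposable instrument can only induce a rank-one POVM. Both constructions thus arrive at essentially the same kind of object---indecomposable instruments over a rank-one refinement of the joint measurement, preparing outcome-labelled pure states---but your derivation of it is direct and verifiable in a few lines.
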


\begin{proof}
We base our construction on the pair of compatible instruments for a given pair $(A,B)$ of compatible measurements from Ref.~\cite{MF22}. While these instruments are in general not indecomposable, we show using Proposition \ref{prop:detailed_equiv} that they are post-processing equivalent to their detailed instruments. Therefore, following from Corollary \ref{cor:compatible_equiv}, the detailed instruments (which are indecomposable) are also compatible.

Let us recall the construction from Ref.~\cite{MF22}. Consider two POVMs, $\{A(x)\}_{x \in \Omega_A}$ and $\{B(y)\}_{y \in \Omega_B}$ on $\cH$ such that they are compatible, that is, there exists a POVM $\{ G(x,y) \}_{x \in \Omega_A, y \in \Omega_B}$ on $\cH$ such that
\begin{gather}
A(x) = \sum_y G(x,y) \quad \forall x \\
B(y) = \sum_x G(x,y) \quad \forall y.
\end{gather}
Now consider a Naimark dilation of the joint measurement $G$, that is, a projective measurement $\Pi(x,y)$ on a Hilbert space $\cH \otimes \cK$ such that
\begin{equation}
\tr[ (\rho \otimes \ketbraq{0}) \Pi(x,y) ] = \tr[ \rho G(x,y) ] \quad \forall \rho \in \cS(\cH)
\end{equation}
for some $\ket{0} \in \cK$. Furthermore, consider a rank-1 fine-graining of $\Pi(x,y)$, given by
\begin{equation}
\Pi(x,y) = \sum_{z_{xy} \in P(x,y)} \ketbraq{ \phi_{z_{xy}} },
\end{equation}
where $P(x,y)$ is a set indexing the rank-1 projections that comprise $\Pi(x,y)$. It is clear that $\braket{ \phi_{z_{xy}} | \phi_{z'_{x'y'}} } = \delta_{x,x'} \delta_{y,y'} \delta_{ z_{xy}, z'_{x'y'} }$.

Take the $A$-compatible instrument
\begin{equation}
\begin{split}
\cI_A = & \left. \Big\{ \Phi^A_x: \cS(\cH) \to \cL^+_{\cH \otimes \cK} ~| \right. \\  \Phi^A_x(\rho) = & \left. \sum_y \sum_{z_{xy} \in P(x,y)} \ketbraq{ \phi_{z_{xy}} }( \rho \otimes \ketbraq{0} ) \ketbraq{ \phi_{z_{xy}} } \right. \\
= & \left. \sum_y \sum_{z_{xy} \in P(x,y)} K^A_{(y,z_{xy}),x} \rho (K^A_{(y,z_{xy}),x})^\dagger \Big\} \right.
\end{split}
\end{equation}
where $K^A_{(y,z_{xy}),x} = \ketbraq{ \phi_{z_{xy}} } (\Id_\cH \otimes \ket{0})$, and the $B$-compatible instrument
\begin{equation}
\begin{split}
\cI_B = & \left. \Big\{ \Phi^B_y: \cS(\cH) \to \cL^+_{\cH \otimes \cK} ~| \right. \\  \Phi^B_y(\rho) = & \left. \sum_x \sum_{z_{xy} \in P(x,y)} \ketbraq{ \phi_{z_{xy}} }( \rho \otimes \ketbraq{0} ) \ketbraq{ \phi_{z_{xy}} } \right. \\
= & \left. \sum_x \sum_{z_{xy} \in P(x,y)} K^B_{(x,z_{xy}),y} \rho (K^B_{(x,z_{xy}),y})^\dagger \Big\} \right.
\end{split}
\end{equation}
where $K^B_{(x,z_{xy}),y} = \ketbraq{ \phi_{z_{xy}} } (\Id_\cH \otimes \ket{0})$. It has been shown in Ref.~\cite{MF22} that $\cI_A$ and $\cI_B$ are compatible.

Using Proposition \ref{prop:detailed_equiv}, we show that $\cI_A$ and $\cI_B$ are post-processing equivalent to their detailed instruments, $\hat{\cI}_A$ and $\hat{\cI}_B$, respectively. Indeed, we have that
\begin{equation}
\begin{split}
& \left. (K^A_{(y,_{xy}z),x})^\dagger K^A_{(y',z'_{xy'}),x} \right. \\
& \left. = (\Id_\cH \otimes \bra{0}) \ket{ \phi_{z_{xy}} } \braket{ \phi_{z_{xy}} | \phi_{z'_{xy'}} } \bra{ \phi_{z'_{xy'}} } (\Id_\cH \otimes \ket{0}) = 0 \right.
\end{split}
\end{equation}
for all $x$ whenever $(y,z_{xy}) \neq (y',z'_{xy'})$, which implies that $\cI_A \sim \hat{\cI}_A$ by Proposition \ref{prop:detailed_equiv}. Similarly, we can show that $\cI_B \sim \hat{\cI}_B$. Therefore, following from Corollary \ref{cor:compatible_equiv}, we have that the indecomposable instruments $\hat{\cI}_A$ and $\hat{\cI}_B$ are compatible.
\end{proof}

\begin{remark}
While for every pair of compatible measurements there exists a pair of compatible instruments that induce the measurements, it is not true that all such instruments are compatible. Indeed, consider the L\"uders instrument $\Phi_x(\rho) = \sqrt{A(x)} \rho \sqrt{A(x)}$ of a measurement~$A$. Given two trivial measurements $A$ and $B$ (which are always compatible), the corresponding L\"uders instruments are not compatible, since they both induce the identity channel \cite{MF22}. However, following from Ref.~\cite{MF22}, there exists a pair of instruments compatible with $A$ and $B$ such that these instruments are compatible. Moreover, it was proven in Ref.~\cite{Hay06} that every $A$-compatible instrument is a post-processing of the L\"uders instrument of $A$. Thus, it would be interesting to characterize that starting from the (potentially incompatible) L\"uders instruments of a pair of compatible measurements $A$ and $B$, at what point in the post-processing preorder do we obtain a compatible pair of $A$ and $B$-compatible instruments. Another consequence of the fact that every $A$-compatible instrument is a post-processing of the L\"uders instrument of $A$ is that every instrument that is compatible with the L\"uders instrument of $A$ is compatible with all $A$-compatible instruments.
\end{remark}

\section{Conclusion}\label{Sec:conc}

In this paper, we have studied the quantification and characterization of the recently introduced concept of parallel compatibility of quantum instruments, which is a natural generalization of the more established notions of measurement and channel compatibility. We have defined the incompatibility robustness of quantum instruments to quantify the incompatibility of instruments, and have proved universal bounds on this quantity. Furthermore, we have proved that post-processing of quantum instruments is a free operation in a potential resource theory of quantum instruments. We also have proved that free PID supermaps, which are free operations for traditional compatibility, are not free operations for parallel compatibility. Then we have provided families of instruments for which our bounds are tight. Finally, we have proved that for every pair of compatible measurements there exists a pair of compatible indecomposable instruments.

Given this well-defined way of quantifying the incompatibility of quantum instruments, it is a natural further research direction to see how the incompatibility robustness relates to quantitative measures of the performance of instruments in quantum information processing tasks. Since quantum instruments appear in the description of sequential tasks, such as sequential prepare-and-measure protocols \cite{MTB19,MBP20} or sequential Bell scenarios \cite{GWCA+14,SGGP15,BC20}, it would be an interesting research direction to investigate the quantitative relationship between the incompatibility robustness of instruments and the success probability or Bell violation in these information processing tasks. Furthermore, while we provide some examples and counterexamples for free operations of parallel compatibility, it could be a potential further research direction to systematically study such free operations and develop a resource theory of instrument incompatibility.\\

\section{Acknowledgements}

The authors thank Prof.~Sibasish Ghosh and Prof.~Prabha Mandayam for their valuable comments.
This  project  has  received  funding  from  the  European  Union’s  Horizon~2020  research and innovation programme under the Marie Sk\l{}odowska-Curie grant agreement No.~754510. M.F.~acknowledges funding from the Government of Spain (FIS2020-TRANQI, Severo Ochoa CEX2019-000910-S), Fundaci\'o Cellex, Fundaci\'o Mir-Puig and Generalitat de Catalunya (CERCA, AGAUR SGR 1381).

\end{document}